\pgfplotsset{compat=1.16}
\tikzstyle{internal} = [draw, fill, shape=circle]
\tikzstyle{external} = [shape=circle]
\tikzstyle{square}   = [draw, fill, rectangle]
\tikzstyle{triangle} = [draw, fill, regular polygon, regular polygon sides=3, inner sep=3pt]
\tikzstyle{pentagon} = [draw, fill, regular polygon, regular polygon sides=5, inner sep=2pt, minimum size=14pt]
\tikzset{every fit/.append style=text badly centered}
\tikzset{>=latex} % arrow tips
\renewcommand{\Pr}{\mathop{\mathrm{Pr}}\nolimits}
\def\*#1{\mathbf{#1}}
\def\+#1{\mathcal{#1}}
\def\-#1{\mathrm{#1}}
\def\=#1{\mathbb{#1}}
\newcommand{\abs}[1]{\ensuremath{\left\vert#1\right\vert}}
\newcommand{\dist}{\operatorname{dist}}
\newcommand{\defeq}{:=}
\newtheorem{theorem}{Theorem}
\newtheorem{lemma}[theorem]{Lemma}
\newtheorem{proposition}[theorem]{Proposition}
\theoremstyle{definition}
\newtheorem{definition}[theorem]{Definition}
\theoremstyle{remark}
\crefname{theorem}{Theorem}{Theorems}
\crefname{observation}{Observation}{Observations}
\crefname{claim}{Claim}{Claims}
\crefname{condition}{Condition}{Conditions}
\crefname{algorithm}{Algorithm}{Algorithms}
\crefname{property}{Property}{Properties}
\crefname{example}{Example}{Examples}
\crefname{fact}{Fact}{Facts}
\crefname{lemma}{Lemma}{Lemmas}
\crefname{corollary}{Corollary}{Corollaries}
\crefname{definition}{Definition}{Definitions}
\crefname{remark}{Remark}{Remarks}
\crefname{proposition}{Proposition}{Propositions}
\crefname{equation}{equation}{equations}
\crefname{enumi}{Case}{Case}
\def\prob#1#2#3{\goodbreak\begin{list}{}{\labelwidth\z@ \itemindent-\leftmargin
      \itemsep\z@  \topsep6\p@\@plus6\p@
      \let\makelabel\descriptionlabel}
  \item[\textbf{Name}]#1
  \item[\textbf{Instance}]#2
  \item[\textbf{Output}]#3
  \end{list}}
\providecommand\@dotsep{5}
\def\listtodoname{Todo list}
\def\listoftodos{\@starttoc{tdo}\listtodoname}
\newcommand{\IntLength}{\ensuremath{\frac{1+\delta}{\Delta}}}
\newcommand{\Vol}{\operatorname{Vol}}
\newcommand{\Inc}{\operatorname{Inc}}
\newcommand{\Flat}{\operatorname{Flat}}
\newcommand{\zo}[1]{[0,1]^{#1}}     % ``zo=zeroone''
\title{Deterministic approximation for the volume of the truncated fractional matching polytope}
\author{}
\author{Heng Guo, Vishvajeet N}
\address[]{School of Informatics, University of Edinburgh, Informatics Forum, 10 Crichton Street, Edinburgh, EH8 9AB, UK.}
\email{hguo@inf.ed.ac.uk}
\email{nvishvajeet@gmail.com}
\thanks{This project has received funding from the European Research Council (ERC) under the European Union's Horizon 2020 research and innovation programme (grant agreement No.~947778).}
\begin{document}

\begin{abstract}
  We give a deterministic polynomial-time approximation scheme (FPTAS) for the volume of the truncated fractional matching polytope for graphs of maximum degree $\Delta$,
  where the truncation is by restricting each variable to the interval $[0,\IntLength]$, and $\delta\le \frac{C}{\Delta}$ for some constant $C>0$.
  We also generalise our result to the fractional matching polytope for hypergraphs of maximum degree $\Delta$ and maximum hyperedge size $k$, truncated by $[0,\IntLength]$ as well,
  where $\delta\le C\Delta^{-\frac{2k-3}{k-1}}k^{-1}$ for some constant $C>0$.
  The latter result generalises both the first result for graphs (when $k=2$), and a result by Bencs and Regts (2024) for the truncated independence polytope (when $\Delta=2$).
  Our approach is based on the cluster expansion technique.
\end{abstract}
\maketitle

\section{Introduction}

The evaluation of volume is a classic computing task.
The pioneer work of Dyer, Frieze, and Kannan \cite{DFK91} showed that given a membership oracle, the volume of a convex body can be approximated to arbitrary accuracy in randomised polynomial time.
On the other hand, in the same model, deterministic approximation takes at least exponential time \cite{Ele86,BZ87}.
This marked an early milestone that distinguished the computational power of deterministic algorithms from that of randomized algorithms.

Nonetheless, computing volumes might still be interesting for other models and/or more restricted cases,
where the lower bound of \cite{Ele86,BZ87} does not apply,
and interesting deterministic volume algorithms may still exist.
In particular, there appears to be no run-time lower bound for deterministically approximating the volume of a polytope given by facets.
In this case, as the membership oracle is trivial to implement,
the randomised algorithm of Dyer, Frieze, and Kannan \cite{DFK91} achieves an $\varepsilon$-approximation in time polynomial in the input size and $\frac{1}{\varepsilon}$.
The deterministic counterpart of such an algorithm is called \emph{fully polynomial-time approximation scheme} (FPTAS).
Unfortunately, the best deterministic algorithmic technique for polytope volumes appears to fall short of achieving FPTAS.
They either require exponential time \cite{Law91,Bar93}\footnote{The algorithms of \cite{Law91,Bar93} are in fact exact algorithms and the dominating term in their run-time is linear in the number of vertices of the polytope.} or yield exponential approximation \cite{Bar09,BR21}.
In the hope of getting an FPTAS, we may focus on even more restricted cases, such as the independence polytope or matching polytope for graphs.

Along this direction, Gamarnik and Smedira \cite{GS23} considered the relaxed independence polytope.
Given a graph $G=(V,E)$, it is defined as $\{\mathbf{x}\in\zo{V}\mid \forall\{u,v\}\in E,~x_u+x_v\le 1\}$.
Clearly, if we further restrict every $x_v$ to the interval $[0,1/2]$, this polytope degenerates to a cube of side length $1/2$ and its volume is simply $2^{-\abs{V}}$.
Gamarnik and Smedira \cite{GS23} showed that we can push beyond this trivial case,
namely, if we truncate the relaxed independence polytope by the interval $\left[ 0,\frac{1}{2}\left( 1+\frac{O(1)}{\Delta^2} \right) \right]$ for graphs of maximum degree $\Delta$,
a quasi-polynomial-time approximation scheme exists.
Their technique is based on the correlation decay approach \cite{Wei06,BG06}.
Subsequently, Benec and Regts \cite{BR24} improved the interval to $\left[ 0,\frac{1}{2}\left( 1+\frac{O(1)}{\Delta} \right) \right]$ and the run-time to polynomial-time,
based on the zeros of polynomial approach \cite{Bar16,PR17}.
The bound on this interval appears to be where the limit of the current methods is.

In this paper, we explore what other polytopes may admit efficient deterministic volume approximation.
In particular, we consider the natural dual of the independence polytope, namely the matching polytope, or more precisely,
its standard relaxation, the fractional matching polytope.
Note that although it is the dual of the relaxed independence polytope,
its volume might be drastically different.

\begin{definition}[Fractional matching polytope]\label{def:fractional-matching}
  Given a graph $G=(V,E)$, the \emph{fractional matching polytope} is defined as follows
  \begin{align}
    P_{G}\defeq\left\{ \mathbf{x} \in \zo{E}\mid \sum_{e \sim v} x_e \leq 1 \text{ for every } v \in V\right\},\notag
  \end{align}
  where $e\sim v$ if the edge $e$ is adjacent to $v$.
\end{definition}

For the fractional matching polytope, the trivial truncation is with the interval $\left[0,\frac{1}{\Delta}\right]$.
Similar to \cite{GS23,BR24}, we also truncate $P_G$ by an interval that is slightly longer, multiplicatively, than the trivial truncation.
\begin{definition}[Truncated fractional matching polytope]\label{def:truncate-fractional-matching}
  For $\delta >0$ and a graph $G=(V,E)$ of maximum degree $\Delta$, the \emph{truncated fractional matching polytope} is defined as follows
  \begin{align}
    P_{G,\delta} \defeq\left\{ \mathbf{x} \in \left[0,\IntLength\right]^E\mid \sum_{e \sim v} x_e \leq 1 \text{ for every } v \in V\right\}.\notag
  \end{align}
\end{definition}
Denote the interval $\left[0, \IntLength\right]$ by $M_{\delta}$. 
Then $P_{G,\delta}={M_{\delta}}^E\cap P_{G}$.
Additionally, for any $v\in V$, the constraint  $\sum_{e \sim v} x_e \leq 1$ is denoted by $C_v$.
We are interested in computing the volume of $P_{G,\delta}$, i.e.~the quantity 
\begin{align}\label{eqn:vol}
  \Vol(P_{G,\delta}) \defeq \bigintssss_{{M_\delta}^E} \prod _{v \in V}\mathbbm 1_{C_v}  d\mu, 
\end{align}
where $\mu$ is the Lebesgue measure and the indicator function $\mathbbm 1_{C_v}:{M_\delta}^E \rightarrow \{0,1\}$ outputs $1$ if and only if the constraint $C_v$ is satisfied, i.e.~when  $\sum_{e\sim v} x_e \leq 1$.

Our main result regarding $\Vol(P_{G,\delta})$ is the following.
Interestingly, similar to \cite{BR24}, the relative margin for the truncation interval we get is also $\frac{C}{\Delta}$.

\begin{theorem}  \label{thm:FPTAS-matching}
  For graphs of maximum degree $\Delta\ge 2$ and $\delta\le\frac{C}{\Delta}$ for some constant $C>0$,
  there is a fully polynomial-time approximation scheme (FPTAS) for $\Vol(P_{G,\delta})$.
\end{theorem}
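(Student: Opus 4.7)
The plan is to express $\Vol(P_{G,\delta})$ as a polymer partition function and then apply the abstract cluster expansion machinery. Writing $a\defeq(1+\delta)/\Delta$ for the length of $M_\delta$, we factor out the ambient cube volume as $\Vol(P_{G,\delta})=a^{|E|}Z$, where $Z=\Pr[\bigcap_{v\in V}C_v]$ is the probability under $x_e\sim U(M_\delta)$ i.i.d.\ that every constraint holds. Inclusion--exclusion gives $Z=\sum_{S\subseteq V}(-1)^{|S|}\Pr[\bigcap_{v\in S}B_v]$ with $B_v\defeq\overline{C_v}$ the violation event at $v$. Since $B_v$ depends only on $\{x_e:e\sim v\}$, two events $B_u,B_v$ are independent whenever $uv\notin E(G)$, so grouping $S$ by its connected components in $G[S]$ factorises the joint probability. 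This recasts $Z$ as the partition function of an abstract polymer model whose polymers $\gamma$ are the nonempty vertex subsets of $V$ with $G[\gamma]$ connected, weight $w(\gamma)\defeq(-1)^{|\gamma|}\Pr[\bigcap_{v\in\gamma}B_v]$, and with two polymers compatible iff they are vertex-disjoint and $G$ contains no edge between them.

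The technical heart is a per-polymer weight bound $|w(\gamma)|\le r^{|\gamma|}$ with $r$ small enough to dominate the classical enumeration bound $(e\Delta)^n$ on connected $n$-vertex subgraphs of $G$ through a fixed vertex. Substituting $z_e\defeq a-x_e$ (also uniform on $[0,a]$) rewrites $B_v$ as $\{\sum_{e\sim v}z_e<\delta\}$, i.e.\ every ``edge defect'' near $v$ must be simultaneously small. In the regime $\delta\le C/\Delta$ with $C\le 1$, any vertex of degree $d<\Delta$ has $da<1$ and hence $\Pr[B_v]=0$, while a degree-$\Delta$ vertex satisfies $\Pr[B_v]=(\delta\Delta/(1+\delta))^\Delta/\Delta!$ by the simplex volume formula (valid because $\delta\le a$ in this regime), which is at most $(eC/\Delta)^\Delta$ after Stirling. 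Taking a greedy independent set $I\subseteq\gamma$ of size at least $|\gamma|/(\Delta+1)$, the events $\{B_v\}_{v\in I}$ depend on pairwise disjoint variable sets and hence factor, giving $|w(\gamma)|\le\prod_{v\in I}\Pr[B_v]\le(eC/\Delta)^{\Delta|\gamma|/(\Delta+1)}$, i.e.\ the required bound with $r=(eC/\Delta)^{\Delta/(\Delta+1)}$.

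Combining this weight bound with the enumeration bound and verifying the Kotecký--Preiss criterion for abstract polymer models yields absolute convergence of the cluster expansion $\log Z=\sum_{\Gamma}\phi(\Gamma)\prod_{\gamma\in\Gamma}w(\gamma)$ provided $C$ is below a suitable universal constant. Standard machinery then converts this into an FPTAS: truncating the expansion at clusters of total size $K=O(\log(n/\varepsilon))$ contributes multiplicative error at most $\varepsilon$, and the truncated sum can be evaluated deterministically by enumerating all rooted connected polymers of size at most $K$ in time $n\cdot(e\Delta)^{O(K)}$, computing each $w(\gamma)$ as the volume of an explicit polytope of dimension at most $\Delta K$ (exactly, via piecewise-polynomial formulas for sums of uniforms on a slab), and summing Ursell contributions per cluster. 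Exponentiating and multiplying by $a^{|E|}$ returns the desired approximation to $\Vol(P_{G,\delta})$.

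The hard part will be calibrating the weight bound with the correct $\Delta$-dependence. A naive Hoeffding estimate gives only $\Pr[B_v]\le e^{-\Omega(\Delta)}$, which after the independent-set reduction yields $|w(\gamma)|\le e^{-O(|\gamma|)}$ and is not small enough to absorb the enumeration factor $(e\Delta)^{|\gamma|}$. In the specific regime $\delta\le C/\Delta$, however, the simplex volume formula is available and delivers a markedly stronger $\Delta^{-\Omega(\Delta)}$-type decay per degree-$\Delta$ vertex; this extra factor is precisely what survives the $(\Delta+1)$ loss in the independent-set step, closes Kotecký--Preiss, and pins down the universal constant $C$ allowed by the theorem.
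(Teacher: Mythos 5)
Your proposal follows essentially the same route as the paper: rewrite $\Vol(P_{G,\delta})$ by inclusion--exclusion as a polymer model on connected vertex subsets, bound $|w(\gamma)|$ by restricting to a maximal independent set of $\gamma$ and applying the simplex volume formula to degree-$\Delta$ vertices, verify Koteck\'{y}--Preiss against the $(e\Delta)^\ell$ connected-subgraph count, and invoke the standard cluster-expansion FPTAS with exact polytope-volume computation for each polymer. One small remark: you use the safe bound $|I|\ge|\gamma|/(\Delta+1)$ where the paper claims $|I_\gamma|\ge|\gamma|/\Delta$; the weaker exponent still closes the KP criterion with a universal constant $C$, so your calibration is fine (and arguably more carefully justified than the paper's).
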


The proof of \Cref{thm:FPTAS-matching} relies on the algorithmic cluster expansion approach \cite{HPR20,JKP20}.
This is very closely related to the zeros of polynomial approach Bencs and Regts \cite{BR24} took. %\goje{taken by [BR24] instead?}
The first step is to rewrite $\Vol(P_{G,\delta})$ as the partition function of a polymer model on the graph $G$.
Roughly speaking, each polymer corresponds to a set of connected vertices where the constraints on these vertices are violated.
For the algorithmic approach of the polymer model to work, 
we need to show that the total weights of these violations decay rapidly,
or more technically, the so-called Koteck\'{y}-Preiss criterion \cite{KP86} holds.
Intuitively, when we truncate with the trivial interval $\left[ 0,\frac{1}{\Delta} \right]$, 
none of the violation may happen.
With a longer truncation interval, violations can happen, but the small relative margin we allow implies that violations can only happen with small probability.
Our method involves some careful estimates of these violation probabilities.
We also note that it appears difficult to apply the correlation decay method here, 
and a direct application of the argument of Bencs and Regts \cite{BR24} would yield a smaller margin $\delta=\frac{C}{\Delta^2}$ for some constant~$C>0$. 

As mentioned earlier, the volumes of the relaxed independence polytope and the fractional matching polytope do not appear to be related,
and thus \Cref{thm:FPTAS-matching} is not directly comparable to the main result of \cite{BR24}.
To put both results under a unified framework,
we also consider hypergraph matchings.
Let $H=(V,E)$ be a hypergraph.
Consider the following polytope:
\begin{align}\label{eqn:hyper-polytope}
  P_{H}\defeq\left\{ \mathbf{x} \in \zo{E}\mid \sum_{e \ni v} x_e \leq 1 \text{ for every } v \in V\right\},
\end{align}
where $e\ni v$ if the hyperedge $e$ contains $v$.
This is a relaxation of the hypergraph matching polytope and a natural generalisation of the fractional matching polytope for graphs in \Cref{def:fractional-matching}.
We denote it by the \emph{fractional hypergraph matching polytope}.

We consider hypergraphs of maximum degree $\Delta\ge 2$ and maximum hyperedge size $k\ge 2$.
In this case, the polytope $P_H$ is indeed defined by a set of linear constraints where each variable appear at most $k$ times, and each constraint has at most $\Delta$ variables.
When $k=2$, this degenerates to the fractional matching polytope, and when $\Delta=2$, this degenerates to the relaxed independence polytope.
Similar to \Cref{def:truncate-fractional-matching}, we truncate it by a cube of side length $\IntLength$.
Recall that $M_{\delta}$ denotes the interval $\left[0, \IntLength\right]$.
Define the truncated polytope $P_{H,\delta}\defeq{M_{\delta}}^E\cap P_{H}$.
Our result regarding $\Vol(P_{H,\delta})$ is the following. 

\begin{theorem}  \label{thm:FPTAS-hyper-matching}
  For hypergraphs of maximum degree $\Delta\ge 2$ and maximum hyperedge size $k\ge 2$, and $\delta\le\frac{C}{\Delta^{\frac{2k-3}{k-1}}k}$ for some constant $C>0$,
  there is an FPTAS for $\Vol(P_{H,\delta})$.
\end{theorem}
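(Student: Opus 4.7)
The plan is to follow the algorithmic cluster expansion strategy outlined for \Cref{thm:FPTAS-matching}, adapted to the hypergraph setting. The first step is to rewrite $\Vol(P_{H,\delta})$ as a polymer partition function via inclusion--exclusion on the constraint indicators in~\eqref{eqn:vol}:
\begin{align*}
\frac{\Vol(P_{H,\delta})}{\tp{\IntLength}^{|E|}} = \sum_{S \subseteq V}(-1)^{|S|}\,\Pr\sqtp{\bar C_v \text{ for all } v\in S},
\end{align*}
where $\bar C_v$ denotes the violation $\sum_{e\ni v}x_e>1$ under $\*x$ uniform on $M_\delta^E$. I would take the auxiliary graph $G_H$ on $V$ with $u\sim v$ whenever $u$ and $v$ share a hyperedge of $H$, and let polymers be connected subsets $\gamma\subseteq V$ in $G_H$. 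Two vertices in distinct connected components of $G_H$ share no hyperedge, so the joint violation probability factorises over components; this gives
\begin{align*}
\frac{\Vol(P_{H,\delta})}{\tp{\IntLength}^{|E|}} = \sum_{\{\gamma_1,\dots,\gamma_r\}}\prod_{i=1}^r w(\gamma_i),\qquad w(\gamma)\defeq(-1)^{|\gamma|}\Pr\sqtp{\bar C_v\text{ for all }v\in\gamma},
\end{align*}
where the outer sum is over collections of pairwise compatible polymers (no common vertex or hyperedge).

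The heart of the argument is a sharp bound on $|w(\gamma)|$. I would change variables $V_e\defeq 1-\Delta x_e/(1+\delta)$, so that each $V_e$ is uniform on $[0,1]$ and, for any vertex $v$ with $d(v)=\Delta$, the event $\bar C_v$ becomes $\sum_{e\ni v}V_e<\tau$ with $\tau\defeq\Delta\delta/(1+\delta)$; vertices with $d(v)<\Delta/(1+\delta)$ contribute nothing and can be discarded. For a polymer $\gamma$ with $|\gamma|=n$, the joint event is the intersection of $n$ simplex slabs in the $m=|E(\gamma)|$ incident $V$-variables. I would bound this volume by fixing a spanning tree of $\gamma$ in $G_H$, conditioning on the shared-hyperedge variables along the tree, and applying the simplex formula $\Pr\sqtp{\sum_{i=1}^d V_i<t}=t^d/d!$ (valid for $t\le 1$) to the resulting per-vertex marginals. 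Two combinatorial quantities enter: connectedness of $\gamma$ in $G_H$ forces at least $(n-1)/(k-1)$ hyperedges shared between vertices of $\gamma$, while the total incidence count satisfies $m\ge n\Delta/k$ once every active vertex has saturated degree. Carefully tracking both should yield a bound of the form $|w(\gamma)|\le\tp{c\,\Delta^{\alpha(k)}\delta}^{\beta(k)\,n}$ with exponents matching the statement.

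To conclude, I would verify the Koteck\'y--Preiss criterion. Since $G_H$ has maximum degree at most $\Delta(k-1)$, standard subtree counting bounds the number of polymers of size $n$ through any fixed vertex by $(e\Delta k)^n$. Combined with the weight bound, the criterion reduces to a numerical inequality that is satisfied precisely when $\delta\le C/(\Delta^{(2k-3)/(k-1)}k)$ for a sufficiently small constant $C>0$; the exponent $(2k-3)/(k-1)$ is what balances polymer multiplicity against per-vertex volume decay. The framework of~\cite{HPR20,JKP20} then turns the convergent cluster expansion of $\log\!\tp{\Vol(P_{H,\delta})/\tp{\IntLength}^{|E|}}$ into an FPTAS by truncating at logarithmic order and enumerating clusters in polynomial time. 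I expect the main obstacle to be the polymer weight bound: naive estimates using only pointwise inequalities $V_e<\tau$ give a much weaker threshold of the order $\delta\lesssim (\Delta k)^{-k/\Delta}/\Delta$, and recovering the correct exponent will require systematically extracting the $1/d!$ simplex savings at every vertex of $\gamma$ along the integration scheme.
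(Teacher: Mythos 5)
Your proposal stays with the ``naive'' polymer model over the flattened graph $\Flat(H)$ (your $G_H$) and tries to bound $|w(\gamma)|$ directly, but this approach cannot reach the stated threshold, and the paper explicitly flags this: bounding the raw weight only gives $|w_H(\gamma)|<(e\delta)^{|\gamma|/(k-1)}$, which validates Koteck\'y--Preiss only up to $\delta=\Theta\bigl((\Delta k)^{-(k-1)}\bigr)$. More to the point, no bound of the form $|w(\gamma)|\le(\text{small})^{|\gamma|}$ can hold for this model. Here is a concrete obstruction. Take $n\le k$ vertices $v_1,\dots,v_n$ and put all of them in $\Delta$ common hyperedges $e_1,\dots,e_\Delta$ (each of size $\le k$, possibly padded with outside vertices), so every $v_i$ has degree exactly $\Delta$. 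Then $\gamma=\{v_1,\dots,v_n\}$ is a polymer of size $n$ in $\Flat(H)$, $|E(\gamma)|=\Delta$, and every constraint $\overline{C_{v_i}}$ is the \emph{same} event $\sum_{j=1}^\Delta y_{e_j}<\delta$. Hence
\begin{align*}
|w_H(\gamma)| = \Bigl(\IntLength\Bigr)^{-\Delta}\frac{\delta^\Delta}{\Delta!} = \Theta(\delta^\Delta),
\end{align*}
a quantity independent of $n$. Your two combinatorial inequalities $m\ge n\Delta/k$ and ``at least $(n-1)/(k-1)$ shared hyperedges'' are both true but vacuous in this example ($m=\Delta\ge n\Delta/k$ since $n\le k$; the shared-hyperedge count is satisfied by $e_1$ alone for arbitrary $n$), so they give no decay in $n$. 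Consequently the sum $\sum_{\ell}(e\Delta(k-1))^{\ell-1}\sup_{|\gamma|=\ell}|w(\gamma)|e^{(1+\rho)\ell}$ diverges for any fixed $\delta$ once $\ell$ and $k$ are large, and the KP criterion fails.

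The missing idea is a re-parameterisation of the polymer model that collects the sign cancellations across all $\gamma$ sharing the same hyperedge set. The paper maps each connected $K$ to a ``minimal connected subgraph'' $T=\varphi(K)$ by greedily discarding vertices that introduce no new hyperedge, proves that $\varphi^{-1}(T)=\{K: T\subseteq K\subseteq T\cup B(T)\}$ for a set $B(T)$ of ``broken'' vertices (\Cref{lem:MCS-broken}, a hypergraph analogue of broken circuits), and shows the aggregated weight $u(T)=\sum_{K:\varphi(K)=T}w_H(K)$ simplifies to a single integral where the $B(T)$ indicators appear un-violated. The key structural fact purchased by this grouping is $|E(T)|\ge|T|$ (\Cref{lem:MCS}), which is exactly what lets one extract a $\delta$-factor per vertex rather than per $(k-1)$ vertices and leads to $|u(T)|\le(\Delta\delta)^{|T|}(e/\Delta)^{|T|/(k-1)}$ (\Cref{lem:MCS-weight}). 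In the example above, $\varphi$ collapses all $2^{n-1}$ subsets of $\{v_1,\dots,v_n\}$ containing $v_1$ onto the single MCS $T=\{v_1\}$, and their alternating signs cancel to give $|u(T)|=O(\delta)$ — the behaviour your direct bound cannot see. Without some such cancellation mechanism your plan cannot be salvaged to the exponent $(2k-3)/(k-1)$.
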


The bound $\delta =O\left( \left(\Delta^{\frac{2k-3}{k-1}}k\right)^{-1}\right)$ recovers \Cref{thm:FPTAS-matching}.
Namely, when $k=2$, $\delta=O\left( \frac{1}{\Delta} \right)$.
Moreover, when $\Delta=2$, $\delta=O\left( \frac{1}{k} \right)$ and the interval length is $\frac{1+\delta}{\Delta}=\frac{1}{2}+O\left( \frac{1}{k} \right)$,
which recovers the result of~\cite{BR24}.

The proof of \Cref{thm:FPTAS-hyper-matching} also relies on the algorithmic cluster expansion approach.
It combines our technique for proving \Cref{thm:FPTAS-matching} and a generalisation of the technique in \cite{BR24}.
In particular, we introduce a different polymer model in this case,
which requires a new generalisation of the classic broken circuit theory \cite{Whi32,Tut54} to hypergraphs.

An immediate open problem is if \Cref{thm:FPTAS-hyper-matching} holds with $\delta=O\left( \frac{1}{\Delta k} \right)$, 
which is discussed in more detail in \Cref{sec:discuss}.
A much more challenging question is if we can obtain deterministic volume approximation for these polytopes without truncation,
or with truncation by the interval $[0,1-\delta]$ for some small~$\delta$.
For the closely related problems of approximating the number of independent sets or matchings,
deterministic algorithms \cite{Wei06,PR17,BGKNT07} match their randomised counterparts, at least for bounded degree graphs.
In contrast, volume approximation with no or little truncation appears to be out of reach for current deterministic methods.

The rest of the paper is organised as follows.
In \Cref{sec:matching}, we consider the fractional matching polytope and show \Cref{thm:FPTAS-matching}.
In \Cref{sec:hyper-matching}, we turn our attention to the hypergraph case and show \Cref{thm:FPTAS-hyper-matching}.
Finally, in \Cref{sec:discuss}, we discuss some potential future directions and the current obstacles.

\section{Cluster Expansion} \label{sec:matching}
In this section, we briefly review the polymer model and the algorithmic cluster expansion approach,
and apply it to $\Vol(P_{G,\delta})$ to show \Cref{thm:FPTAS-matching}.
Let $\+P$ be a finite set, whose elements we call \emph{polymers}. We endow $\mathcal P$ with a symmetric and reflexive incompatibility relation $\not \sim$ between any two polymers, and also a weight function $w(\cdot)$\footnote{The weight function can be complex-valued, but in this paper we will focus on real-valued weight functions.} that assigns a weight $w(\gamma)$ to the polymer~$\gamma$. Given two polymers $\gamma_1$ and $\gamma_2$, we write $\gamma_1\not \sim \gamma_2$ if $\gamma_1$ and $\gamma_2$ are incompatible, and $\gamma_1\sim \gamma_2$ otherwise. 
For a set $\Gamma$ of polymers, we say it is \emph{compatible} if for any two $\gamma_1,\gamma_2\in \Gamma$, $\gamma_1\sim\gamma_2$.
Additionally, there is a size function for polymers, and we use $\abs{\gamma}$ to denote the size of $\gamma$.

\begin{definition}[Polymer partition function]\label{def:polymer-model}
  The partition function of the polymer model above is
  \begin{align}
    \Xi(\mathcal P, w)\defeq \sum_{\Gamma}\prod_{\gamma \in \Gamma} w(\gamma),\notag
  \end{align}
  where the sum is over compatible $\Gamma\subseteq\+P$.
\end{definition}
We will be interested in the cluster expansion which is an infinite series representation of $\log \Xi (P)$.   Given a multiset $\Gamma$ of polymers from $\mathcal P$, we define the incompatibility graph $H(\Gamma)$ where we have a vertex $v_\gamma$ for every polymer $\gamma \in \Gamma$ and an edge between every pair of vertices corresponding to an incompatible pair of polymers. A multiset $\Gamma$ is called a \emph{cluster} if the incompatibility graph $H(\Gamma)$ is connected. We also denote the set of all clusters from $\mathcal P$ as $\mathcal C$. 
\begin{definition}[Cluster Expansion]\label{def:cluster-exp}
  The cluster expansion of $\Xi$ is 
  \begin{align}
    \log \Xi(\mathcal P, w)\defeq \sum_{\Gamma  \in \mathcal  C}\Phi(\Gamma)\prod_{\gamma \in \Gamma } w(\gamma), \notag
  \end{align}
  where $\Phi(\Gamma)$ is the Ursell function.
\end{definition}
The exact form of the Ursell function will not be important to our need, and we refer the interested reader to, for example, the survey by Jenssen \cite{Jen24}. 
The cluster expansion is a good approximation to the logarithm of the partition function under various conditions.
One of the most well-known conditions is the following.

\begin{proposition}[Koteck\'{y}-Preiss Criterion \cite{KP86}]\label{prop:kp}
  Let $g:\mathcal P\rightarrow [0,\infty)$ be a ``decay function''. 
  Suppose that for all $\gamma \in \mathcal P$, we have,
  \begin{align}\label{eqn:KP}
    \sum_{\gamma' \not \sim \gamma}\abs{w(\gamma')}  e^{\abs{\gamma'}+g(\gamma')}	\leq \abs{\gamma}.
  \end{align}
  Then, the cluster expansion converges absolutely.
\end{proposition}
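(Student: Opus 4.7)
The plan is to follow the standard proof of the Kotecký--Preiss criterion via the Penrose tree-graph identity. The concrete target is the stronger per-polymer bound: for every $\gamma^\ast\in\+P$,
\begin{align*}
\sum_{\substack{\Gamma\in\+C\\\gamma^\ast\in\Gamma}}|\Phi(\Gamma)|\prod_{\gamma\in\Gamma}|w(\gamma)|\,e^{|\gamma|+g(\gamma)}\;\le\;|\gamma^\ast|\,e^{|\gamma^\ast|+g(\gamma^\ast)},
\end{align*}
which is strong enough to imply absolute (and uniform in the weights) convergence of the cluster expansion of $\log\Xi(\+P,w)$.

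First I would apply the Penrose tree-graph inequality, which says that for every cluster $\Gamma$,
\begin{align*}
|\Phi(\Gamma)|\;\le\;\frac{1}{|\Gamma|!}\,\bigl|\+T(H(\Gamma))\bigr|,
\end{align*}
where $\+T(H(\Gamma))$ is the set of spanning trees of the incompatibility graph $H(\Gamma)$. Substituting this into the cluster expansion and swapping the order of summation, one interprets each (cluster, spanning tree) pair as a rooted, labeled tree whose vertices carry polymer labels, the root being $\gamma^\ast$ and every parent-child pair incompatible. The $\frac{1}{|\Gamma|!}$ factor is absorbed by the passage from unordered multisets to ordered rooted trees.

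Next, I would recursively bound the resulting sum over rooted labeled trees from the leaves upward. At each node $\gamma$, summing over admissible child slots and their subtree labels factors into a product of per-child sums of the form $\sum_{\gamma'\not\sim\gamma}|w(\gamma')|e^{|\gamma'|+g(\gamma')}\cdot(\text{subtree bound at }\gamma')$. A standard exponential generating function argument then shows that the total sum over subtree configurations rooted at $\gamma$ is bounded by $|w(\gamma)|\,e^{g(\gamma)}\cdot\exp\bigl(\sum_{\gamma'\not\sim\gamma}|w(\gamma')|e^{|\gamma'|+g(\gamma')}\bigr)$. The hypothesis \Cref{eqn:KP} bounds the exponent by $|\gamma|$, contributing exactly the $e^{|\gamma|}$ factor per node needed to telescope up the tree and recover the displayed target.

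The main technical obstacle is the combinatorial bookkeeping that aligns the $\frac{1}{|\Gamma|!}$ symmetry factor with the rooted-tree reformulation and ensures that the tree-graph bound is tight enough to let \Cref{eqn:KP} function as a sufficient condition. A cleaner alternative, attributed to Dobrushin and bypassing the Penrose step, is to define $a_n(\gamma^\ast)$ to be the partial sum over clusters of size at most $n$ containing $\gamma^\ast$ and show by induction on $n$, using only \Cref{eqn:KP}, that $a_n(\gamma^\ast)\le|\gamma^\ast|\,e^{|\gamma^\ast|+g(\gamma^\ast)}$; this is the route spelled out in the survey of Jenssen cited in the paper, and is arguably the cleanest path for this paper since it uses the hypothesis verbatim.
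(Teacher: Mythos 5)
The paper does not prove this proposition: it is stated as a black-box citation of Koteck\'{y} and Preiss (1986), with the explicit remark that the form given is a convenient special case rather than the most general one, and a pointer to Jenssen's survey for background. There is therefore no internal proof to compare against, but your sketch is a fair outline of the two standard routes (Penrose tree-graph and the Dobrushin/Jenssen induction), and you are right that the inductive route uses the hypothesis \eqref{eqn:KP} in precisely the stated form. Two caveats on your displayed target, though. First, you keep the factor $e^{|\gamma|+g(\gamma)}$ inside the product over $\gamma\in\Gamma$, but the induction you describe only supports retaining $e^{g(\gamma)}$ there: if $W(\gamma)$ is the sum over rooted subtrees at $\gamma$ with per-node weight $|w(\gamma')|e^{|\gamma'|+g(\gamma')}$, the recursion $W(\gamma)\le |w(\gamma)|e^{|\gamma|+g(\gamma)}\exp\bigl(\sum_{\gamma'\not\sim\gamma}W(\gamma')\bigr)$ together with the inductive estimate and \eqref{eqn:KP} gives $W(\gamma)\le |w(\gamma)|e^{2|\gamma|+g(\gamma)}$, which does not close at $W(\gamma)\le|w(\gamma)|e^{|\gamma|+g(\gamma)}$. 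With per-node weight $|w(\gamma')|e^{g(\gamma')}$ instead, the induction closes cleanly, which is why the usual Koteck\'{y}--Preiss conclusion retains only the decay factor in the product; the $e^{|\gamma'|}$ in the hypothesis is entirely spent on the branching. Second, summing over clusters containing $\gamma^\ast$ naturally yields the bound $|w(\gamma^\ast)|e^{|\gamma^\ast|+g(\gamma^\ast)}$ rather than $|\gamma^\ast|e^{|\gamma^\ast|+g(\gamma^\ast)}$; the quantity $|\gamma^\ast|$ typically appears as the bound for the sum over clusters incompatible with a test polymer $\gamma^\ast$ that is not itself required to be in the cluster. Neither slip undermines the conclusion of absolute convergence, but the displayed inequality as written is not the one your induction produces, so the target should be restated before the argument is carried out.
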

We note that the condition in \Cref{prop:kp} is not the most general, but this particular form will be convenient to our use later.

We are going to recast \eqref{eqn:vol} into a polymer partition function.
Let $G=(V,E)$ be a graph of maximum degree $\Delta$.
For $S\subseteq V$, let $\+K(S)$ be the set of connected components in the induced subgraph $G[S]$.

Denote by $\overline {C_v}$ the constraint $\sum_{e\sim v} x_e > 1$.
Using the indicator function for the constraint $\overline{C_v}$, and the fact that $\mathbbm 1_{C_v}= 1-\mathbbm 1_{\overline{C_v}}$, we can expand~\eqref{eqn:vol} as
\begin{align}
  \Vol(P_{G,\delta}) &= \bigintssss_{{M_\delta}^E} \prod _{v \in V}(1-\mathbbm 1_{\overline{C_v}})d\mu\notag\\
  &=\bigintssss_{{M_\delta}^E} \sum_{S\subseteq V}(-1)^{|S|}\prod _{v \in S}\mathbbm 1_{\overline{C_v}}d\mu\notag\\
  &=\bigintssss_{{M_\delta}^E} \sum_{S\subseteq V}\prod_{K \in \+K(S)}(-1)^{\abs{K}}\prod_{v \in K}\mathbbm 1_{\overline{C_v}}d\mu. \notag
\end{align}
We then continue by pushing the integral inside:
\begin{align}
  \Vol(P_{G,\delta})
  & = \sum_{S\subseteq V}\bigintssss_{{M_\delta}^E}\prod_{K \in \+K(S)}(-1)^{\abs{K}}\prod_{v \in K}\mathbbm 1_{\overline{C_v}}d\mu\notag\\
  & = \sum_{S\subseteq V}
  \bigintssss_{{M_\delta}^{E\setminus E(S)}}1d\mu
  \left(\prod_{K \in \+K(S)}  (-1)^{\abs{K}} \bigintssss_{{M_\delta}^{E(K)}}\prod _{v \in K}\mathbbm 1_{\overline{C_v}}d\mu\right),\label{eqn:swap-integral-prod}
\end{align}
where, for a subset $S\subseteq V$ of vertices, $E(S)$ denotes the set of edges adjacent to it.
Note that we swapped the integral with the product in \eqref{eqn:swap-integral-prod}.
This is valid because the constraint regarding each $S_i$ rely on disjoint set of edges.
There is also an extra factor of $\bigintssss_{{M_\delta}^{E\setminus E(S)}}1d\mu$ to account for the edges that are not adjacent to $S$.
As $\bigintssss_{{M_\delta}^{E\setminus E(S)}}1d\mu=\left( \IntLength \right)^{\abs{E}-\abs{E(S)}}$, we have
\begin{align}
  \Vol(P_{G,\delta}) & = \left( \IntLength \right)^{\abs{E}-\abs{E(S)}} \sum_{S\subseteq V}
  \prod_{K \in \+K(S)} \bigintssss_{{M_\delta}^{E(K)}} (-1)^{\abs{K}}\prod _{v \in K}\mathbbm 1_{\overline{C_v}}d\mu \notag\\
  & = \left( \IntLength \right)^{\abs{E}} \sum_{S\subseteq V}\prod_{K \in \+K(S)} w(K),
  \label{eqn:pf-polymer}
\end{align}
where 
\begin{align} \label{eqn:polymer-weight}
  w(K) = \left( \IntLength \right)^{-\abs{E(K)}} (-1)^{\abs{K}} \bigintssss_{{M_\delta}^{E(K)}} \prod _{v \in K}\mathbbm 1_{\overline{C_v}}d\mu.
\end{align}
The redistribution of the $\left( \IntLength \right)^{-\abs{E(S)}}$ factor is valid because $\{E(K)\}_{K\in\+K(S)}$ is a partition of $E(S)$.
 
To fit \eqref{eqn:pf-polymer} into \Cref{def:polymer-model},
we call a subset $S\subseteq V$ a polymer if the induced subgraph $G[S]$ is connected.
Two polymers $S_1$ and $S_2$ are compatible if $\dist_{G}(S_1,S_2)\ge 2$, namely when they are not adjacent.
The size of a polymer is simply the number of vertices it contains.
In addition, equipped with the weight function in \eqref{eqn:polymer-weight}, we then have that
\begin{align}\label{eqn:Xi-PF}
  \Xi(G) = \frac{\Vol(P_{G,\delta})}{\left( \IntLength \right)^{\abs{E}}}.
\end{align}

The polymer model is useful thanks to the following theorem by Jenssen, Keevash, and Perkins \cite{JKP20} (building upon the works of \cite{BHKK08,PR17,HPR20}).

\begin{proposition}[\protect{\cite[Theorem 8]{JKP20}}]\label{prop:fptas-JKP20}
Fix an integer $\Delta>0$ and let $\mathcal G$ be a class of graphs of maximum degree at most  $\Delta$.
Suppose the following conditions hold for a given polymer model with decay function $g(\cdot)$ as in \Cref{prop:kp}:
\begin{enumerate}
  \item \label{item:JKP-1}
    There exist constants $c_1,c_2>0$ such that given a connected subgraph $\gamma$, determining whether $\gamma$ is a polymer, and then computing $w(\gamma)$ and $g(\gamma)$ can be done in time $O(|\gamma|^{c_1}e^{c_2|\gamma|})$.
  \item \label{item:JKP-2}
      there exists $\rho>0$ such that for every $G \in \mathcal G$ and every polymer $\gamma \in \mathcal P(G)$, $g(\gamma) \geq \rho |\gamma|$.
  \item \label{item:JKP-3}
    The Koteck\'{y}-Preiss criterion as in \Cref{prop:kp} holds for $g(\cdot)$.
\end{enumerate}
Then there exists an FPTAS for $\Xi(G)$ for every $G \in \mathcal G$.
\end{proposition}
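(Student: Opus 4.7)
The plan is to follow the standard algorithmic cluster expansion recipe of Helmuth--Perkins--Regts \cite{HPR20} and Patel--Regts \cite{PR17}: derive a tail bound on the cluster expansion from the Koteck\'y--Preiss criterion, truncate the series at a logarithmic depth to obtain an additive $\eps$-approximation to $\log \Xi(G)$, and evaluate the truncated sum in polynomial time by exploiting the bounded degree of $G$. Exponentiating the approximation of $\log \Xi(G)$ then yields a multiplicative FPTAS for $\Xi(G)$ after a standard rescaling of the error parameter.

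The first step turns hypothesis (\ref{item:JKP-3}) into a quantitative tail bound. A standard consequence of \Cref{prop:kp}, proved by iterating the KP inequality along the incompatibility graph of a cluster, is the rooted estimate
\begin{align*}
  \sum_{\Gamma \in \+C,\ \Gamma \ni \gamma_0} \abs{\Phi(\Gamma)}\prod_{\gamma \in \Gamma}\abs{w(\gamma)}\, e^{g(\Gamma)} \le \abs{\gamma_0}
\end{align*}
for every fixed polymer $\gamma_0$, where $g(\Gamma) \defeq \sum_{\gamma \in \Gamma}g(\gamma)$. Combined with hypothesis (\ref{item:JKP-2}), we have $e^{g(\Gamma)} \ge e^{\rho\|\Gamma\|}$ with $\|\Gamma\| \defeq \sum_{\gamma \in \Gamma}\abs{\gamma}$, so the contribution to the cluster expansion of clusters with $\|\Gamma\| > m$ is at most $n e^{-\rho m}$ once we charge each cluster to an anchor vertex in $V(G)$, of which there are at most $n = \abs{V(G)}$. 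Defining the truncated series
\begin{align*}
  T_m(G) \defeq \sum_{\Gamma \in \+C,\ \|\Gamma\| \le m}\Phi(\Gamma)\prod_{\gamma \in \Gamma}w(\gamma),
\end{align*}
we therefore have $\abs{\log \Xi(G) - T_m(G)} \le n e^{-\rho m}$, and taking $m = \ceil{\rho^{-1}\log(n/\eps)} + O(1)$ brings the error below $\eps$.

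The remaining task is to compute $T_m(G)$ in time polynomial in $n$ and $1/\eps$. I would use the Patel--Regts enumeration: list all connected subgraphs of $G$ of size at most $m = O(\log(n/\eps))$ by a rooted search from each vertex, of which there are at most $n \cdot (e\Delta)^{O(m)}$ by the classical counting bound for connected subgraphs in bounded-degree graphs. For each such candidate, hypothesis (\ref{item:JKP-1}) lets us decide polymer membership and compute $w(\gamma)$ and $g(\gamma)$ in time $e^{O(m)}$. Clusters are then built as connected subgraphs of the (implicitly generated) polymer incompatibility graph, which itself has bounded combinatorial degree since any polymer of size $s$ is incompatible with at most $(e\Delta)^{O(s)}$ other polymers, so the same rooted-enumeration technique applies at the cluster level. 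The Ursell function $\Phi(\Gamma)$ is computable in time polynomial in $\|\Gamma\|$ and introduces no bottleneck. The total running time is $(n/\eps)^{O(1/\rho)}$, polynomial because $\rho$ and $\Delta$ are constants.

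The main obstacle is this final algorithmic step: one cannot afford to materialise all polymers in $\+P(G)$, and the cluster enumeration must avoid revisiting the same multiset of polymers. The remedy, which I would borrow directly from \cite{PR17}, is to generate rooted connected subgraphs incrementally and to merge them into clusters via on-the-fly connectivity checks in the incompatibility graph, charging the work of each cluster to its anchor. The bounded-degree hypothesis on $\+G$ is precisely what bounds both the number of connected subgraphs of $G$ of a given size and the combinatorial degree of the polymer incompatibility graph, and is therefore what makes the whole procedure run in polynomial time.
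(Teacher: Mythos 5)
The paper does not prove this proposition at all: it is quoted verbatim as \cite[Theorem 8]{JKP20}, and your sketch is essentially a reconstruction of the standard proof in \cite{HPR20,JKP20,PR17} (tail bound from Koteck\'y--Preiss, truncation at depth $O(\rho^{-1}\log(n/\eps))$, enumeration of clusters via rooted connected-subgraph search in a bounded-degree graph), so in approach you are aligned with what the paper relies on.

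One claim in your write-up is inaccurate, though not fatally so: the Ursell function $\Phi(\Gamma)$ is \emph{not} computable in time polynomial in $\|\Gamma\|$ --- up to normalisation it is an evaluation of the Tutte polynomial of the incompatibility graph $H(\Gamma)$ at the point $(1,0)$, which is $\numP$-hard in general. What the cited works actually use is an algorithm (due to Bj\"orklund, Husfeldt, Kaski, and Koivisto \cite{BHKK08}, which is precisely why that reference appears alongside \cite{PR17,HPR20} in the paper) computing $\Phi(\Gamma)$ in time $e^{O(|\Gamma|)}$, exponential in the number of polymers in the cluster but not in the number of edges of $H(\Gamma)$; since clusters in the truncated series have size $O(\log(n/\eps))$, this still yields overall polynomial time, so your conclusion stands once this step is stated correctly. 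A smaller cosmetic point: your ``rooted estimate'' over clusters containing a fixed polymer $\gamma_0$ is weaker than the usual Koteck\'y--Preiss conclusion (which sums over all clusters \emph{incompatible} with $\gamma_0$), and the cleanest way to anchor clusters at vertices is to apply that stronger form to a single-vertex (or ghost) polymer at each $v$; summing your per-polymer estimates over all polymers containing $v$ would not by itself give the $n e^{-\rho m}$ bound.
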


\subsection{Verifying the Koteck\'{y}-Preiss criterion}\label{sec:KP-verify}
The most important condition in \Cref{prop:fptas-JKP20} is the Koteck\'{y}-Preiss criterion as in \Cref{prop:kp}.
Let $g(\gamma)\defeq\rho\abs{\gamma}$ for some sufficiently small constant~$\rho$.
For any $S\subseteq V$, let $N^+(S)$ be the extended neighbourhood of $S$, namely, $N^+(S)\defeq S\cup \partial S$.
Then we have,
\begin{align}
  \sum_{\gamma' \not \sim \gamma}\abs{w(\gamma')}  e^{\abs{\gamma'}+g(\gamma')} & = \sum_{\gamma' \not \sim \gamma}\abs{w(\gamma')}  e^{(1+\rho)\abs{\gamma'}} \notag\\
  & \le \sum_{v\in N^+(\gamma)} \sum_{\gamma'\ni v}\abs{w(\gamma')}  e^{(1+\rho)\abs{\gamma'}}. \label{eqn:KP-weight}
\end{align}
Note that $\abs{N^+(\gamma)}\le (\Delta+1)\abs{\gamma}$.
Moreover, Borgs, Chayes, Kahn, and Lov\'asz showed the following lemma.
\begin{lemma} [\protect{\cite[Lemma 2.1]{BCKL13}}]  \label{lem:BCKL}
  For a graph $G$ of maximum degree $\Delta$,
  the number of connected subgraphs containing $v$ of size~$\ell$ is at most $\frac{1}{\ell}\binom{\ell \Delta}{\ell-1} \le (e\Delta)^{\ell-1}$.
\end{lemma}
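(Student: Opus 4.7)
The plan is to prove the bound in three stages: reduce counting connected subgraphs to counting subtrees, encode each subtree as a labelled rooted tree, and count the resulting objects via a cycle-lemma identity.

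For the reduction, I would fix an arbitrary total order on $E(G)$ and send each connected vertex subset $S \ni v$ with $|S| = \ell$ to the lexicographically smallest spanning tree of $G[S]$. Since the vertex set of this tree is exactly $S$, the map is injective, so the number of connected subgraphs of size $\ell$ containing $v$ is at most the number of subtrees of $G$ on $\ell$ vertices containing $v$. For the encoding, fix a labelling of the neighbours of each vertex of $G$ by distinct elements of $\{1,\ldots,\Delta\}$, and root each subtree at $v$. The children of any vertex then carry pairwise distinct labels from $\{1,\ldots,\Delta\}$; call the resulting object a \emph{$\Delta$-labelled rooted tree} on $\ell$ vertices. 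Distinct subtrees of $G$ yield distinct $\Delta$-labelled trees, so it suffices to bound the number of such trees.

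For the count itself, the standard plan is to encode a $\Delta$-labelled rooted tree by the sequence recording, in depth-first order, the set of child-labels at each vertex. These encodings correspond bijectively to a family of sequences of length $\ell\Delta$ in which $\ell-1$ positions are ``active,'' giving unrestricted count $\binom{\ell\Delta}{\ell-1}$; Raney's rotation argument (the cycle lemma) then selects exactly one admissible cyclic rotation per orbit of size $\ell$, yielding the sharp count $\frac{1}{\ell}\binom{\ell\Delta}{\ell-1}$. The second inequality $\frac{1}{\ell}\binom{\ell\Delta}{\ell-1} \le (e\Delta)^{\ell-1}$ then follows from $\binom{n}{k} \le (en/k)^k$ applied with $n = \ell\Delta$ and $k = \ell-1$, combined with $(\ell/(\ell-1))^{\ell-1} \le e$ and $e \le \ell$ for $\ell \ge 3$ (the cases $\ell = 1, 2$ are checked by hand).

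The main obstacle is the cycle-lemma count. Setting up the bijection between $\Delta$-labelled rooted trees and sequences respecting the right Dyck-type partial-sum condition, and carrying out Raney's rotation argument to produce precisely one admissible rotation per orbit, is the piece that does the real combinatorial work; the first two steps of the plan are essentially bookkeeping. I expect the whole argument to fit in a short self-contained paragraph once the bijection is made explicit.
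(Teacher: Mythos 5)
The paper itself does not prove this lemma; it is cited verbatim from \cite[Lemma~2.1]{BCKL13}, so there is no paper-internal proof to compare against. Your plan is, however, the standard proof of this tree-enumeration bound, and it is correct in outline. The spanning-tree injection and the passage to $\Delta$-labelled rooted trees (labelling the neighbours of each vertex of $G$ by $\{1,\dots,\Delta\}$, rooting at $v$, and reading off child-labels) are both sound, and the resulting class of objects is exactly the set of subtrees of the rooted infinite $\Delta$-ary tree containing the root, whose count is the Fuss--Catalan number $\frac{1}{(\Delta-1)\ell+1}\binom{\Delta\ell}{\ell}=\frac{1}{\ell}\binom{\ell\Delta}{\ell-1}$. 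You are also right that this slightly overcounts (a non-root vertex has only $\Delta-1$ available child-slots), but that only helps the inequality. Your handling of the second inequality $\frac{1}{\ell}\binom{\ell\Delta}{\ell-1}\le(e\Delta)^{\ell-1}$, including the $\ell=1,2$ edge cases, checks out.

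Two points to make precise in the cycle-lemma step, which you correctly flag as the real work. First, the rotation must act on $\ell$ blocks of $\Delta$ bits each (equivalently, on the sequence $(c_1,\dots,c_\ell)$ of out-degrees in DFS order), not on the $\ell\Delta$ individual bits; your phrase ``orbit of size $\ell$'' suggests you intend this, but it should be stated. Second, orbits have size exactly $\ell$ because $\gcd(\ell,\ell-1)=1$ rules out periodic sequences, and then the Dvoretzky--Motzkin cycle lemma (applied to $a_i=c_i-1\ge-1$ with $\sum a_i=-1$) gives exactly one admissible rotation per orbit. Since the weight $\prod_i\binom{\Delta}{c_i}$ attached to each out-degree sequence is rotation-invariant, Vandermonde gives $\sum_{\sum c_i=\ell-1}\prod_i\binom{\Delta}{c_i}=\binom{\ell\Delta}{\ell-1}$ and division by $\ell$ follows. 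With those two clarifications the argument is complete.
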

Thus we need an upper bound of $w(\gamma)$ for any polymer $\gamma$ of a fixed size $\ell$.

\begin{lemma}  \label{lem:weight-bound}
  For a polymer $\gamma$ of size $\ell$ and $\delta<\frac{1}{\Delta-1}$, 
  \begin{align*}
    \abs{w(\gamma)}\le (e\delta)^{\abs{\gamma}}.
  \end{align*}
\end{lemma}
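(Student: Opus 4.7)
The first step will be to dispose of the case where $\gamma$ contains a vertex of degree $<\Delta$: if $v\in\gamma$ has $d_v<\Delta$ in $G$, then $\sum_{e\sim v}x_e\le d_v\cdot\frac{1+\delta}{\Delta}\le(\Delta-1)\frac{1+\delta}{\Delta}<1$ under $\delta<\frac{1}{\Delta-1}$, so $\mathbbm{1}_{\overline{C_v}}\equiv 0$ on $M_\delta^{E(\gamma)}$ and hence $w(\gamma)=0$. We may therefore assume that every vertex of $\gamma$ has degree exactly $\Delta$.

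Under this assumption, $|w(\gamma)|$ equals $\Pr[\bigcap_{v\in\gamma}\overline{C_v}]$ when the $x_e$ are drawn independently and uniformly from $M_\delta$. Passing to slack variables $s_e=\frac{1+\delta}{\Delta}-x_e$ (also uniform on $M_\delta$) turns $\overline{C_v}$ into $\sum_{e\sim v}s_e<\delta$. For a single $v$, the Irwin--Hall simplex-volume formula combined with Stirling's bound $\Delta!\ge(\Delta/e)^\Delta$ gives
\[
\Pr[\overline{C_v}]=\frac{1}{\Delta!}\Bigl(\frac{\Delta\delta}{1+\delta}\Bigr)^{\!\Delta}\le\Bigl(\frac{e\delta}{1+\delta}\Bigr)^{\!\Delta}\le(e\delta)^\Delta,
\]
which already handles the base case $|\gamma|=1$ (since $(e\delta)^\Delta\le e\delta$ when $e\delta\le 1$, and otherwise $|w(\gamma)|\le 1\le e\delta$ trivially).

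For $|\gamma|\ge 2$ the plan is to proceed by induction, peeling off a carefully chosen vertex at each step. I will take a non-cut vertex $v$ of $G[\gamma]$ (which exists by connectedness), split its $\Delta$ incident edges into $p$ ``private'' edges going into $V\setminus\gamma$ and $\Delta-p$ ``shared'' edges going into $\gamma\setminus\{v\}$, and integrate out the private edges first. Conditional on all other edge values, the inner integral is again a simplex volume with remaining slack budget at most $\delta$, so it is bounded by $\delta^p/p!$. This yields a recursion
\[
|w(\gamma)|\le \frac{1}{p!}\Bigl(\frac{\Delta\delta}{1+\delta}\Bigr)^{\!p}\cdot|w(\gamma\setminus\{v\})|,
\]
whose per-step factor (via Stirling) is essentially $\bigl(\frac{e\Delta\delta}{p(1+\delta)}\bigr)^p$. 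The main obstacle will be proving this per-step factor is at most $e\delta$ uniformly: the bound holds comfortably when $p$ is a constant fraction of $\Delta$, but becomes delicate when $\gamma$ is dense and the peeled vertex has only a few private edges (small $p$), where the naive bound may exceed $e\delta$. In that regime I expect that either a smarter peeling order is needed, or one must estimate the volume of the slack polytope directly via a Shearer/Brascamp--Lieb-type covering inequality that exploits the structure of the collection $\{E_v\}_{v\in\gamma}$ of vertex-neighborhoods in $E(\gamma)$, where each edge is covered at most twice.
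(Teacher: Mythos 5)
Your handling of the degree-$<\Delta$ case and the single-vertex computation are both correct and match the paper's. But your plan for $|\gamma|\ge 2$ has a genuine gap, which you yourself flag: the peel-one-vertex induction fails when the peeled vertex $v$ has few private edges. Concretely, if $v$ has $p$ private edges, the per-step factor is roughly $\frac{1}{p!}\bigl(\frac{\Delta\delta}{1+\delta}\bigr)^p$, and for $p=1$ this is about $\Delta\delta$, which under $\delta\asymp\frac{1}{\Delta}$ is $\Theta(1)$, not $O(\delta)$. Nothing in the hypotheses prevents a dense polymer where every non-cut vertex has $p=1$, so no peeling order fixes this, and the Brascamp--Lieb/Shearer route you gesture at is never carried out. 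The proposal is therefore incomplete as written.

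The missing idea is much simpler than a covering inequality. Instead of peeling, pass to a \emph{maximal independent set} $I_\gamma$ of $G[\gamma]$. Dropping the constraints for $v\notin I_\gamma$ only increases the probability:
\[
\Pr\Bigl[\bigwedge_{v\in\gamma}\overline{C_v}\Bigr]\le\Pr\Bigl[\bigwedge_{v\in I_\gamma}\overline{C_v}\Bigr]=\prod_{v\in I_\gamma}\Pr\bigl[\overline{C_v}\bigr],
\]
where the last equality holds because distinct non-adjacent vertices have disjoint incident-edge sets, so the events are independent. Each factor is your single-vertex bound $\le\bigl(\frac{e\delta}{1+\delta}\bigr)^\Delta$, and since $\gamma$ has maximum degree $\Delta$ a maximal independent set has $|I_\gamma|\ge|\gamma|/\Delta$. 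Multiplying gives $\bigl(\frac{e\delta}{1+\delta}\bigr)^{\Delta\cdot|\gamma|/\Delta}<(e\delta)^{|\gamma|}$. This sidesteps the dense-peeling regime entirely, because you are not paying for every vertex of $\gamma$ --- you are paying a strong (degree-$\Delta$) factor for only $|\gamma|/\Delta$ of them, which yields exactly the exponent $|\gamma|$.
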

\begin{proof}
  Note that we can reinterpret \eqref{eqn:polymer-weight} as
  \begin{align*}
    \abs{w(\gamma)} = \Pr\left[~\bigwedge_{v\in \gamma}\overline{C_{v}}~\right],
  \end{align*}
  where the probability is over the product distribution of $\{x_e\}_{e\in E(\gamma)}$ where each $x_e$ is uniform over $\left[ 0,\IntLength \right]$.
  Let $I_\gamma$ be a maximal independent set of $\gamma$.
  Then $\abs{I_{\gamma}}\ge \frac{\abs{\gamma}}{\Delta}$, and
  \begin{align}\label{eqn:prob-indset}
    \Pr\left[~\bigwedge_{v\in \gamma}\overline{C_{v}}~\right] &\le \Pr\left[~\bigwedge_{v\in I_\gamma}\overline{C_{v}}~\right] = \prod_{v\in I_\gamma}\Pr\left[~\overline{C_{v}}~\right]. 
  \end{align}
  Recall that $\overline{C_v}$ is the constraint that $\sum_{e\sim v} x_e>1$.

  If there is some $u\in I_{\gamma}$ such that $\deg_G(u)\le \Delta-1$,
  then $\sum_{e\sim u}x_e\le\frac{\Delta-1}{\Delta}(1+\delta)<1$ as $\delta < \frac{1}{\Delta-1}$ 
  In this case $\Pr\left[~\overline{C_{u}}~\right] = 0$ and the lemma follows.

  Therefore we can assume that $\deg_G(v)=\Delta$ for all $v\in I_{\gamma}$.
  Let $y_e=\IntLength - x_e$.
  Then
  \begin{align*}
    \Pr\left[\sum_{e\sim v} x_e>1\right] & = \Pr\left[ \sum_{e\sim v} y_e<\delta \right],
  \end{align*}
  where each $y_e$ is uniform over $\left[0,\IntLength\right]$.
  Denote by $S_\delta$ the simplex $\left\{\mathbf{y}\mid \sum_{e\sim v}y_e < \delta,\text{ and }\forall e\sim v,~y_e\ge 0\right\}$.
  Then, $S_\delta\subseteq {M_\delta}^{E(v)}$ as $\delta < \IntLength$.
  Thus,
  \begin{align*}
    \Pr\left[ \sum_{e\sim v} y_e<\delta \right] = \frac{\Vol\left( S_\delta \cap {M_\delta}^{E(v)} \right)}{\Vol\left( {M_\delta}^{E(v)} \right)}
    = \frac{\Vol\left( S_\delta \right)}{\Vol\left( {M_\delta}^{E(v)} \right)}
    =\frac{\frac{\delta^{\Delta}}{\Delta!}}{\left( \IntLength \right)^{\Delta}} \le \left( \frac{e\delta}{1+\delta} \right)^{\Delta},
  \end{align*}
  where we used the fact that a standard simplex of dimension $\Delta$ has volume $1/\Delta!$ and $\Delta!\ge \left( \frac{\Delta}{e} \right)^{\Delta}$.
  Plugging it back to \eqref{eqn:prob-indset},
  \begin{align*}
    \abs{w(\gamma)} 
    & \le  \prod_{v\in I_\gamma}\Pr\left[~\overline{C_{v}}~\right] \le \left( \frac{e\delta}{1+\delta} \right)^{\Delta\cdot\frac{\abs{\gamma}}{\Delta}} 
    < \left( e\delta\right)^{\abs{\gamma}}.\qedhere
  \end{align*}
\end{proof}

Now we can verify the Koteck\'{y}-Preiss criterion.

\begin{lemma}  \label{lem:KP-matching}
  For graphs of maximum degree $\Delta\ge 2$,
  $\delta \le \frac{1}{e^{4}\Delta}$, and sufficiently small $\rho$,
  the Koteck\'{y}-Preiss criterion \eqref{eqn:KP} holds.
\end{lemma}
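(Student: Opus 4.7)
The plan is to combine the bound from \eqref{eqn:KP-weight} with the counting estimate of \Cref{lem:BCKL} and the weight estimate of \Cref{lem:weight-bound}, and then collapse everything into a geometric series in the polymer size.

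Starting from \eqref{eqn:KP-weight}, I would organise the sum $\sum_{\gamma' \ni v} |w(\gamma')| e^{(1+\rho)|\gamma'|}$ by the size $\ell = |\gamma'|$. Since each polymer is a connected subgraph containing $v$, \Cref{lem:BCKL} gives at most $(e\Delta)^{\ell-1}$ polymers of size $\ell$ through $v$, and \Cref{lem:weight-bound} gives $|w(\gamma')| \le (e\delta)^\ell$. Hence
\begin{align*}
  \sum_{\gamma' \ni v} |w(\gamma')|\, e^{(1+\rho)|\gamma'|} \;\le\; \sum_{\ell \ge 1} (e\Delta)^{\ell-1} (e\delta)^\ell e^{(1+\rho)\ell} \;=\; \frac{1}{e\Delta} \sum_{\ell \ge 1} r^\ell,
\end{align*}
where $r \defeq e^{3+\rho}\Delta\delta$. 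The hypothesis $\delta \le \frac{1}{e^4\Delta}$ gives $r \le e^{\rho-1}$, which is bounded by, say, $1/2$ for sufficiently small $\rho$, so the geometric series converges and the inner sum is at most $\frac{1}{e\Delta} \cdot \frac{r}{1-r} \le \frac{2 e^{\rho-2}}{\Delta}$.

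Next, I would multiply by the size of the extended neighbourhood. Since $|N^+(\gamma)| \le (\Delta+1)|\gamma|$, plugging back into \eqref{eqn:KP-weight} yields
\begin{align*}
  \sum_{\gamma' \not\sim \gamma} |w(\gamma')| \, e^{|\gamma'|+g(\gamma')} \;\le\; (\Delta+1)|\gamma| \cdot \frac{2 e^{\rho-2}}{\Delta} \;\le\; \frac{3 e^{\rho}}{e^{2}}\, |\gamma|,
\end{align*}
using $(\Delta+1)/\Delta \le 3/2$ for $\Delta \ge 2$. Since $3/e^2 < 1$, taking $\rho > 0$ small enough (concretely, any $\rho < \ln(e^2/3)$) makes the prefactor at most $1$, establishing \eqref{eqn:KP}.

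The only mildly delicate step is the choice of constants: one must verify that the same $\rho$ that makes $r \le 1/2$ (so the geometric bound $\frac{r}{1-r} \le 2r$ holds) also keeps $3 e^{\rho - 2} \le 1$. Both requirements are satisfied simultaneously for any sufficiently small $\rho > 0$, so no subtle interdependence arises. The remaining items of \Cref{prop:fptas-JKP20}, namely that $g(\gamma) = \rho |\gamma|$ satisfies the linear lower bound (trivially) and that polymer recognition and weight computation are possible in time $O(|\gamma|^{c_1} e^{c_2 |\gamma|})$ (noting that $w(\gamma)$ is a low-dimensional volume computable by explicit integration over the polytope $M_\delta^{E(\gamma)} \cap \bigcap_{v\in\gamma}\overline{C_v}$), would be verified separately to complete the proof of \Cref{thm:FPTAS-matching}.
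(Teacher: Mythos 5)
Your proposal is correct and follows essentially the same route as the paper: reduce to a sum over sizes via \eqref{eqn:KP-weight}, invoke \Cref{lem:BCKL} for the polymer count $(e\Delta)^{\ell-1}$ and \Cref{lem:weight-bound} for the bound $(e\delta)^\ell$, and then sum the resulting geometric series with ratio $e^{3+\rho}\Delta\delta \le e^{\rho-1}$. The only cosmetic difference is that you linearise the tail with $\tfrac{r}{1-r}\le 2r$ (valid once $r\le 1/2$), landing on $\tfrac{3e^\rho}{e^2}<1$, whereas the paper keeps the closed form $\tfrac{1.5}{e}\cdot\tfrac{e^{\rho-1}}{1-e^{\rho-1}}<1$; both constants work for sufficiently small $\rho$. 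The closing paragraph about the remaining items of \Cref{prop:fptas-JKP20} is not needed for this lemma, which only asserts the Koteck\'{y}-Preiss criterion.
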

\begin{proof}
  By \eqref{eqn:KP-weight}, \Cref{lem:BCKL}, and \Cref{lem:weight-bound}, for sufficiently small $\rho$,
  \begin{align*}
    \sum_{\gamma \not \sim \gamma'}\abs{w(\gamma')}  e^{\abs{\gamma'}+g(\gamma')} 
    & \le (\Delta+1)\abs{\gamma} \sum_{\ell\ge 1} (e\Delta)^{\ell-1} \left( e\delta \right)^{\ell} e^{(1+\rho)\ell} \\
    & \le \frac{\Delta+1}{e\Delta}\abs{\gamma}\sum_{\ell\ge 1} \left( e^{3+\rho}\delta\Delta \right)^{\ell} 
    \le \frac{1.5}{e}\abs{\gamma}\sum_{\ell\ge 1} \left( e^{\rho-1} \right)^{\ell} \tag{as $\delta \le \frac{1}{e^{4}\Delta}$ and $\Delta\ge 2$}\\
    & \le \left(\frac{1.5}{e}\cdot\frac{e^{\rho-1}}{1-e^{\rho-1}}\right)\abs{\gamma} < \abs{\gamma}.\qedhere
  \end{align*}
\end{proof}

\subsection{Proof of \texorpdfstring{\Cref{thm:FPTAS-matching}}{Theorem 3}}

With the KP criterion verified, we can prove \Cref{thm:FPTAS-matching} now.
\begin{proof}[Proof of \Cref{thm:FPTAS-matching}]
  We apply \Cref{prop:fptas-JKP20}.
  Let $g(\gamma)=\rho\abs{\gamma}$ for a sufficiently small constant $\rho$, which then satisfies Item~\eqref{item:JKP-2}.
  The Item~\eqref{item:JKP-3} follows from \Cref{lem:KP-matching}.
  For Item~\eqref{item:JKP-1},
  determining if $\gamma$ is a polymer is trivial, and computing $g(\gamma)$ is trivial too.

  For computing $w(\gamma)$, recall \eqref{eqn:polymer-weight}.
  It is easy to see that $\left( \IntLength \right)^{\abs{E(\gamma)}} (-1)^{\abs{\gamma}}w(\gamma)$ is in fact the volume of a polytope with variables $\{x_e\}$ for each $e\in E(\gamma)$ and constraints $\overline{C_v}$ for each $v\in \gamma$ and $x_e\ge 0$ for each $e\in E(\gamma)$.
  This polytope has at most $ \binom{\abs{E(\gamma)} + \abs{\gamma}}{\abs{E(\gamma)}} =\binom{\abs{E(\gamma)} + \abs{\gamma}}{\abs{\gamma}}=O((e(\Delta+1))^{\abs{\gamma}})$ vertices,
  as $\abs{E(\gamma)}\le\Delta\abs{\gamma}$.
  Using the polytope volume algorithm by Lawrence \cite{Law91} or by Barvinok \cite{Bar93}, we can thus compute $w(\gamma)$ in $O(e^{C\abs{\gamma}})$ time for some constant $C$ that depends on $\Delta$. 
\end{proof}

\section{Hypergraph matching polytope} \label{sec:hyper-matching}

Now we turn our attention to the hypergraph case and show \Cref{thm:FPTAS-hyper-matching}.
Let $H=(V,E)$ be a hypergraph.
We will assume throughout this section that, for some constants $\Delta,k\ge 2$, vertices of $H$ have maximum degree $\Delta$, and hyperedges of $H$ have maximum size $k$.
Recall that the fractional hypergraph matching polytope is defined in \eqref{eqn:hyper-polytope},
and the truncated version $P_{H,\delta}$ is defined by intersecting $P_H$ with ${M_{\delta}}^E$.

Our $\delta$ will satisfy that $\delta<\frac{1}{\Delta-1}$,
in which case, if the degree of some $v$ is at most $\Delta-1$,
the corresponding constraint is always satisfied.
Thus, we may assume that $\deg_H(v)=\Delta$ for all $v\in V$.

Let $\Flat(H)$ be the flattened graph of $H$, namely its vertex set is still $V$, and for $u,v\in V$, $u$ and $v$ are adjacent if they appear in the same hyperedge $e$ for some $e\in E$.
In other words, we replace each hyperedge $e$ by a clique on the vertex set of $e$.
Similar to \eqref{eqn:swap-integral-prod}, we have
\begin{align}\label{eqn:H-swap-integral-prod}
  \Vol(P_{H,\delta})
  & = \sum_{S\subseteq V}
  \bigintssss_{{M_\delta}^{E\setminus E(S)}}1d\mu
  \left(\prod_{K \in \+K(S)}  (-1)^{\abs{K}} \bigintssss_{{M_\delta}^{E(K)}}\prod _{v \in K}\mathbbm 1_{\overline{C_v}}d\mu\right),
\end{align}
where $\+K(S)$ denote the connected components of the induced subgraph $\Flat(H)[S]$.
Thus, we have a similar polymer model partition function
\begin{align}  \label{eqn:H-pf-polymer}
  \Xi(H) = \frac{\Vol(P_{H,\delta})}{\left( \IntLength \right)^{\abs{E}}}= \sum_{S\subseteq V}\prod_{K \in \+K(S)} w_H(K),
\end{align}
where 
\begin{align} \label{eqn:H-polymer-weight}
  w_H(K) = \left( \IntLength \right)^{-\abs{E(K)}} (-1)^{\abs{K}} \bigintssss_{{M_\delta}^{E(K)}} \prod _{v \in K}\mathbbm 1_{\overline{C_v}}d\mu.
\end{align}
While \eqref{eqn:H-pf-polymer} and \eqref{eqn:H-polymer-weight} look identical to \eqref{eqn:pf-polymer} and \eqref{eqn:polymer-weight},
the underlying graph is different and consequently \Cref{lem:weight-bound} no longer applies.
We may view this as a polymer model over $\Flat(H)$,
by treating connected induced subgraphs of $\Flat(H)$ as polymers.

Directly applying the method of \Cref{lem:weight-bound} would yield
\begin{align*}
  \abs{w_H(\gamma)} 
  & < \left( e\delta\right)^{\frac{\abs{\gamma}}{k-1}}.
\end{align*}
This bound can only validates the Koteck\'{y}-Preiss criterion up to $\delta = \Theta\left(\frac{1}{\Delta k}\right)^{k-1}$. 
To get a better bound, we need a different polymer model.

\subsection{A different polymer model}

Let $\Inc(H)$ be the incidence graph of $H$, where the vertex set is $V\cup E$, and for $v\in V$ and $e\in E$, $v\sim e$ in $\Inc(H)$ if $v\in e$ in $H$.
Then, $\Inc(H)$ is a bipartite graph, and $\Flat(H)$ is exactly $\Inc(H)$ projected on the $V$ side.
%Let $\Line(H)$ be the line graph of $H$, namely $\Inc(H)$ projected on the $E$ side.
For each subset $S\subset V$ of vertices, let $\partial_{\Inc(H)} S$ be the neighbours of $S$ in $\Inc(H)$.
Note that $\partial_{\Inc(H)}S=E(S)$, namely the set of hyperedges that contain at least one vertex in $S$.

We rewrite \eqref{eqn:H-pf-polymer} into a different polymer model.
For a connected induced subgraph $K$ of $\Flat(H)$,
we map it to a \emph{minimal connected subgraph} (MCS) $T$ as follows.
Fix an arbitrary ordering of $V$.
We consider vertices of $K$ in order one at a time, to potentially add them to $T$.
Initialise $T$ to be the first vertex of $K$ and remove it from $K$.
Given the current $T$, we find the first vertex $u$ in $K$ that is adjacent to the current $T$ in $\Flat(H)$.
Add $u$ to $T$ if $E(v_i)\not\subseteq E(T)$.
In other words, we only add $v_i$ if it introduces a new hyperedge to $E(T)$.
Whether $u$ is added to $T$ or not, we remove $u$ from $K$. 
We keep doing this until $K$ is empty.
Denote this mapping by $\varphi$.
We also extend the mapping to any (not necessarily connected) subset $S\subseteq V$ by defining $\varphi(S)=\emptyset$ if $\Flat(H)[S]$ is not connected.
We call a non-empty image $T$ of $\varphi$ an MCS,
namely, $T$ is an MCS if there is a connected $K\subseteq V$ such that $\varphi(K)=T$.
Equivalently, a connected $T$ is an MCS if and only if $\varphi(T)=T$.

\begin{lemma}  \label{lem:MCS}
  Let $T$ be an MCS.
  Then,
  \begin{enumerate}
    \item $\abs{E(T)}\ge\abs{T}$;
    \item $T$ is connected in $\Flat(H)$.
  \end{enumerate}
\end{lemma}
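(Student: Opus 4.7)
The plan is to track the greedy construction of $T=\varphi(K)$ and maintain two invariants on the growing set. Let $T_0=\{v_1\}$ where $v_1$ is the first vertex of $K$, and let $T_{i+1}=T_i\cup\{u_i\}$ for $i=0,1,\ldots,m-1$, where $u_i$ is the $i$-th vertex that passes the acceptance test in the definition of $\varphi$ (vertices examined and rejected do not change $T$). Writing $T=T_m$ for the final snapshot, I would prove by induction on $i$ that (i) $T_i$ is connected in $\Flat(H)$, and (ii) $\abs{E(T_i)}\ge\abs{T_i}$; the two conclusions of the lemma are exactly these invariants at $i=m$.

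For part~2, by the definition of $\varphi$ each accepted $u_i$ is first required to be adjacent to the current $T_i$ in $\Flat(H)$. Hence if $T_i$ is connected in $\Flat(H)$, then $T_{i+1}=T_i\cup\{u_i\}$ is the union of a connected set with a vertex attached to it, and therefore connected. The base case $\abs{T_0}=1$ is immediate, and the induction closes.

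For part~1, the base case invokes the section-wide reduction that every vertex of $H$ has degree exactly $\Delta$, giving $\abs{E(T_0)}=\deg_H(v_1)=\Delta\ge 2>1=\abs{T_0}$. For the inductive step, the acceptance criterion $E(u_i)\not\subseteq E(T_i)$ supplies a hyperedge $e^{\ast}\in E(u_i)\setminus E(T_i)$; since $e^{\ast}\in E(T_{i+1})$, we have $\abs{E(T_{i+1})}\ge\abs{E(T_i)}+1$. Combined with $\abs{T_{i+1}}=\abs{T_i}+1$ and the induction hypothesis, this yields $\abs{E(T_{i+1})}\ge\abs{T_{i+1}}$.

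I do not foresee a genuine obstacle: both invariants are essentially restatements of what the acceptance test in $\varphi$ enforces at each step. The only mildly delicate point is the well-definedness of $\varphi$ itself, i.e.\ that whenever $K$ is still non-empty the process can find a candidate in $K$ adjacent to the current $T$; this follows from $K$ being connected in $\Flat(H)$ together with the fact that rejecting a vertex $u$ with $E(u)\subseteq E(T)$ cannot disconnect $T\cup K$ (any $\Flat(H)$-neighbour of $u$ lies in some hyperedge of $E(u)\subseteq E(T)$ and hence is still adjacent to $T$). That observation is orthogonal to the two invariants above, which are all that is needed for the lemma.
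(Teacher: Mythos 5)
Your proof is correct and takes essentially the same approach as the paper, which simply asserts that both claims follow ``by an induction on the number of vertices of $T$'' without spelling out the invariants. Your two invariants (connectivity preserved at each accepted vertex; each acceptance contributes a fresh hyperedge, so $|E(T_i)|$ grows at least as fast as $|T_i|$) are exactly the intended content of that induction, and the aside on well-definedness of $\varphi$ is a legitimate point the paper also glosses over.
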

\begin{proof}
  Both claims are straightforward by an induction on the number of vertices of $T$.
\end{proof}

While an MCS is connected in $\Flat(H)$, it is not necessarily a tree.
Indeed, an MCS is determined by $\Inc(H)$ and not by $\Flat(H)$ alone.

Our new model have MCSes as polymers.
As before two MCSes are incompatible if their distance is at most $1$.
For this model, define a new weight function
\begin{align*}
  u(T)\defeq \sum_{K:\varphi(K)=T} w_H(K).
\end{align*}
We rewrite \eqref{eqn:H-pf-polymer} as 
\begin{align}  \label{eqn:H-pf-polymer-2}
  \Xi(H) = \sum_{F}\prod_{T \in \+K(F)} u(F),
\end{align}
where the sum is over $F\subseteq V$ whose every component is an MCS.

For an MCS $T$, a vertex $v\in V\setminus T$ is said to be \emph{broken} if $\varphi(T\cup\{v\})=T$.
Let $B(T)$ be the set of broken vertices with respect to~$T$.
This notion is a hypergraph generalisation of the well-known broken circuit theory for graphs \cite{Whi32,Tut54}.

\begin{lemma}  \label{lem:MCS-broken}
  For $K\subseteq V$, $\varphi(K)=T$ if and only if $T\subseteq K\subseteq (T\cup B(T))$.
\end{lemma}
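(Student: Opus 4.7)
The plan is to reduce the lemma to a single synchronization statement: for any $K$ with $T \subseteq K \subseteq T \cup B(T)$, the sequence of non-trivial state transitions (i.e., adds to $T'$) in the processing of $K$ equals, step for step, the sequence in the processing of $T$ alone; broken vertices in $K \setminus T$ are considered at various points but always skipped. Given this, $(\Leftarrow)$ is immediate since the final state in both processings is $T$. For $(\Rightarrow)$, $T \subseteq K$ is trivial from $\varphi(K) \subseteq K$, and for every $v \in K \setminus T$, $v$ is considered but skipped during the processing of $K$ at some state $T'_i \subseteq T$ with $E(v) \subseteq E(T'_i)$; a symmetric synchronization argument shows the same inclusion holds at the moment $v$ is considered in the processing of $T \cup \{v\}$, yielding $\varphi(T\cup\{v\}) = T$ and hence $v \in B(T)$.

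A preliminary observation I would use throughout: every $b \in B(T)$ is strictly larger than $\min T$ in the global ordering. Otherwise, $b$ would be chosen as the initial $T'$ when processing $T \cup \{b\}$, but the initial vertex always remains in the final state, contradicting $\varphi(T\cup\{b\}) = T$. This ensures that the first vertex processed in $K$ coincides with the first vertex processed in $T$, namely $\min T$, so the two processings can be compared starting from the same state.

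The synchronization statement is then proved by induction on the processing steps of $K$, maintaining the invariant that after $s$ steps the current state equals some prefix $\{t_1, \ldots, t_j\}$ of the canonical processing order $t_1, t_2, \ldots, t_n$ of $T$. In the inductive step I would analyse the next first-adjacent vertex $u$ in $K$'s remaining set. If $u \in T$, then I would show $u = t_{j+1}$ (using that the canonical order of $T$ is determined purely by $T$ and the current state) and $u$ is added, since $T$ being an MCS forces $E(t_{j+1}) \not\subseteq E(\{t_1,\ldots,t_j\})$. If $u \in B(T)$, then the key claim is that $E(u) \subseteq E(\{t_1,\ldots,t_j\})$, so $u$ is skipped and the state is unchanged.

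The main obstacle is verifying this last sub-case, i.e., showing $E(u) \subseteq E(\{t_1, \ldots, t_j\})$ when a broken vertex $u$ is first-adjacent at state $\{t_1, \ldots, t_j\}$. The idea is to compare with the processing of $T \cup \{u\}$: there, $u$ is considered at some state $\{t_1, \ldots, t_{j'}\}$ with $E(u) \subseteq E(\{t_1,\ldots,t_{j'}\})$, and one argues $j' \leq j$ by analysing when $u$ first becomes first-adjacent in the two processings, using that $u$ can only become first-adjacent once enough of its neighbours from $T$ have been added. Since the edge sets are monotone in $j$, $E(\{t_1,\ldots,t_{j'}\}) \subseteq E(\{t_1,\ldots,t_j\})$ then gives the desired inclusion, closing the induction.
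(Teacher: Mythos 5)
Your proposal is correct, but it takes a genuinely different route from the paper. The paper proves both directions by induction on $|K|$: it peels off one vertex at a time (the first skipped vertex $v$, respectively the first vertex of $K\setminus T$ processed), shows that the process of $\varphi(K)$ coincides with that of $\varphi(T\cup\{v\})$ up to the point $v$ is dropped, removes $v$, and applies the induction hypothesis to $K\setminus\{v\}$. You instead do a step-by-step \emph{synchronization} argument, inducting on the number of processing steps and maintaining the invariant that the current state is a prefix of the canonical processing order $t_1,\dots,t_n$ of $T$, with every vertex of $K\setminus T$ skipped when considered. Both proofs ultimately hinge on the same comparison with $\varphi(T\cup\{v\})$. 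The key sub-case you flag does work out: a broken vertex $u$ is in fact considered in $\varphi(K)$ at \emph{exactly} the same prefix state $\{t_1,\dots,t_{j'}\}$ as in $\varphi(T\cup\{u\})$, because $j$ and $j'$ are both characterised as the smallest index $i$ with $u$ adjacent to $\{t_1,\dots,t_i\}$ and $u<t_{i+1}$, and skipping broken vertices does not advance the state (so extra competitors in $K$ only delay $u$ within the same state, not to a later one). Your preliminary observation that every $b\in B(T)$ exceeds $\min T$ is correct and is in fact an implicit ingredient of the paper's argument too (it guarantees $\min K=\min T$, so both processes start from the same vertex). The trade-off is that the paper's peeling induction is more economical and easier to verify, whereas your synchronization establishes a stronger structural statement (the state trajectory of $\varphi(K)$ literally traces the canonical order of $T$) at the cost of a more delicate invariant.
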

\begin{proof}
  If $\varphi(K)=T$,
  the inclusion $T\subseteq K$ follows directly from the definition of $\varphi$.
  For the other inclusion, we induce on the size of $K$.
  If $K=T$, the claim holds trivially.
  For the inductive step, let $v\in K\setminus T$ be the first vertex that is not added to $T$ during $\varphi$.
  Then, clearly $\varphi(K\setminus\{v\})=T$, and thus by the induction hypothesis $K\setminus\{v\}\subseteq(T\cup B(T))$.
  We claim that $v\in B(T)$ as well.
  This is because the mapping $\varphi(T\cup\{v\})$ would behave the same as $\varphi(K)$ until $v$ is considered (and then dropped),
  as before $v$ all vertices considered in $\varphi(K)$ are added to $T$.
  As $v$ is dropped next, the rest of the process $\varphi(T\cup\{v\})$ just processes vertices of $T$ and add them.
  It implies that $\varphi(T\cup\{v\})=T$, and thus $v\in B(T)$.
  This finishes the inductive step and shows that $K\subseteq (T\cup B(T))$.
  
  For the other direction, suppose $T\subseteq K\subseteq (T\cup B(T))$.
  We also do an induction on $\abs{K}$.
  The base case of $K=T$ is trivial.
  For the inductive step, consider the first $v\in K\setminus T$ that is processed during $\varphi(K)$.
  Note that up to $v$ all processed vertices are in $T$, and the process is identical to the process of $\varphi(T\cup\{v\})$ up to this point.
  Thus, $v$ would be discarded then.
  The rest of $\varphi(K)$ is identical to $\varphi(K\setminus\{v\})$,
  and by the induction hypothesis, $\varphi(K\setminus\{v\})=T$.
  Thus, $\varphi(K)=T$.
\end{proof}

Notice that if $\varphi(K)=T$, $E(K)=E(T)$.
We then have a better expression for $u(T)$ as follows,
\begin{align}
  u(T) & = \sum_{K:\varphi(K)=T} w_H(K) = \sum_{K:\varphi(K)=T} \left( \IntLength \right)^{-\abs{E(T)}} (-1)^{\abs{K}} \bigintssss_{{M_\delta}^{E(K)}} \prod _{v \in K}\mathbbm 1_{\overline{C_v}}d\mu \tag{by \eqref{eqn:H-polymer-weight}}\\
  & = \left( \IntLength \right)^{-\abs{E(T)}} \sum_{K:T\subseteq K\subseteq T\cup B(T)}  (-1)^{\abs{K}} \bigintssss_{{M_\delta}^{E(T)}} \prod_{v \in K}\mathbbm 1_{\overline{C_v}}d\mu \tag{by \Cref{lem:MCS-broken}}\\
  & = \left( \IntLength \right)^{-\abs{E(T)}} (-1)^{\abs{T}}\sum_{A\subseteq B(T)}   \bigintssss_{{M_\delta}^{E(T)}} \prod_{v \in T}\mathbbm 1_{\overline{C_v}}
    \prod_{v \in A}-\mathbbm 1_{\overline{C_v}}d\mu \notag\\ 
  & = \left( \IntLength \right)^{-\abs{E(T)}} (-1)^{\abs{T}} \bigintssss_{{M_\delta}^{E(T)}} \prod_{v \in T}\mathbbm 1_{\overline{C_v}}
    \prod_{v \in B(T)}\left( 1-\mathbbm 1_{\overline{C_v}} \right)d\mu \notag\\
  & = \left( \IntLength \right)^{-\abs{E(T)}} (-1)^{\abs{T}} \bigintssss_{{M_\delta}^{E(T)}} \prod_{v \in T}\mathbbm 1_{\overline{C_v}}
  \prod_{v \in B(T)}\mathbbm 1_{C_v}d\mu. \label{eqn:uT-sum}
\end{align}

Next we give an upper bound for $\abs{u(T)}$.

\begin{lemma}  \label{lem:MCS-weight}
  Let $T$ be a MCS of size $\ell$. 
  For $\delta<\frac{1}{\Delta}$,
  \begin{align*}
    \abs{u(T)}\le \left( \Delta \delta \right)^{\ell} \left( \frac{e}{\Delta} \right)^{\frac{\ell}{k-1}}.
  \end{align*}
\end{lemma}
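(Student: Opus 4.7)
The plan is to reinterpret $|u(T)|$ probabilistically and then bound it by combining a simplex bound at a well-chosen independent subset of $T$ with an edge bound on the remaining hyperedges. Since the integrand in \eqref{eqn:uT-sum} is non-negative,
\begin{align*}
  |u(T)| = \Pr_{y}\Bigl[\bigwedge_{v \in T}\overline{C_v} \wedge \bigwedge_{v \in B(T)} C_v\Bigr] \le \Pr_{y}\Bigl[\bigwedge_{v \in T}\overline{C_v}\Bigr],
\end{align*}
where $y$ is uniform on $M_\delta^{E(T)}$ and we drop the broken-vertex indicators using $\mathbbm{1}_{C_v} \le 1$. Changing variables to $y_e := \tfrac{1+\delta}{\Delta} - x_e$ as in the proof of \Cref{lem:weight-bound}, each $y_e$ is again uniform on $\bigl[0,\tfrac{1+\delta}{\Delta}\bigr]$, and $\overline{C_v}$ becomes $\sum_{e \ni v} y_e < \delta$, which in particular implies $y_e < \delta$ for every $e \ni v$.

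Next I would pick a maximal independent set $I \subseteq T$ in $\Flat(H)[T]$, making the hyperedge-sets $\{E(v)\}_{v \in I}$ pairwise disjoint. Setting $E_I := \bigcup_{v \in I} E(v)$ and $E_0 := E(T) \setminus E_I$, the event $\bigwedge_{v \in T}\overline{C_v}$ implies both $\bigwedge_{v \in I}\sum_{e \ni v} y_e < \delta$ and $\bigwedge_{e \in E_0} y_e < \delta$; these depend on disjoint sets of variables and so are independent under the product measure. The simplex computation from the proof of \Cref{lem:weight-bound} yields $\Pr[\sum_{e \ni v} y_e < \delta] = \frac{1}{\Delta!}\bigl(\frac{\Delta\delta}{1+\delta}\bigr)^{\Delta}$, while $\Pr[y_e < \delta] = \frac{\Delta\delta}{1+\delta}$ for a single edge, and so
\begin{align*}
  |u(T)| \le (\Delta!)^{-|I|}\Bigl(\frac{\Delta\delta}{1+\delta}\Bigr)^{|E(T)|}.
\end{align*}
Applying Stirling's inequality $\Delta! \ge (\Delta/e)^\Delta$, the MCS bound $|E(T)| \ge \ell$ from \Cref{lem:MCS}, and the hypothesis $\Delta\delta < 1$, this simplifies to $|u(T)| \le (e/\Delta)^{\Delta|I|} (\Delta\delta)^{\ell}$. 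Matching the claimed bound then reduces to producing an independent set with $|I| \ge \ell/(\Delta(k-1))$, so that $(e/\Delta)^{\Delta|I|} \le (e/\Delta)^{\ell/(k-1)}$.

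The main obstacle is producing such a large independent set. Every vertex of $\Flat(H)[T]$ has at most $\Delta(k-1)$ neighbours, so a generic maximal independent set is only guaranteed to have size $\ge \ell/(\Delta(k-1)+1)$, off by one in the denominator. For $\Delta = 2$ the gap is irrelevant: $(e/\Delta)^{\ell/(k-1)} \ge 1$ and even $|I|=0$ suffices. For $\Delta \ge 3$ I expect to close the gap by exploiting the structure of an MCS: the last vertex added to $T$ introduces a new hyperedge whose remaining $k-1$ partners cannot all be in $T$ (else that hyperedge would already be in $E(T_{\ell-1})$), so $\Flat(H)[T]$ is never fully saturated at degree $\Delta(k-1)$. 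Carefully accounting for these ``leaking'' hyperedges, via a tightened greedy argument or a Caro--Wei type weighted bound on the incidence structure, should yield the required independent set.
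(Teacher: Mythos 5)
Your proposal follows the paper's proof essentially step for step: drop the $\mathbbm{1}_{C_v}$ factors for broken vertices, reinterpret $\abs{u(T)}$ as $\Pr[\bigwedge_{v\in T}\overline{C_v}]$ under the product measure, substitute $y_e=\IntLength-x_e$, bound the event by independent simplex events at an independent set $I$ together with $y_e<\delta$ on the remaining hyperedges of $E(T)$, and finish with $\Delta!\ge(\Delta/e)^{\Delta}$, $\abs{E(T)}\ge\ell$ from \Cref{lem:MCS}, and $\Delta\delta<1$. All of these computations are correct, and your remark that the case $\Delta=2$ is trivial is a point the paper glosses over. The one step you leave open, namely exhibiting an independent set in $\Flat(H)[T]$ of size at least $\ell/(\Delta(k-1))$ rather than the generic $\ell/(\Delta(k-1)+1)$, is however a genuine gap in your write-up: it is exactly the step needed to reach the stated exponent $\ell/(k-1)$, and it is the only place where anything beyond connectivity of $T$ is used. (For what it is worth, the paper simply asserts $\abs{I_T}\ge\abs{T}/(\Delta(k-1))$ for a maximal independent set, with a justification that literally only yields the $+1$ denominator, so you have correctly identified a real subtlety; note also that the weaker $\ell/(\Delta(k-1)+1)$ bound would still suffice for \Cref{thm:FPTAS-hyper-matching} after adjusting the constant in $\delta$, just not for the lemma as stated.)

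Your sketched repair is pointed in the right direction but the tools you name would not close it. The structural observation is correct: the last vertex $w$ added during $\varphi(T)=T$ introduces a hyperedge meeting $T$ only in $w$, so $w$ has at most $(\Delta-1)(k-1)$ neighbours inside $T$. But feeding a single deficient vertex into a greedy or Caro--Wei argument only improves the bound $\ell/(D+1)$, $D=\Delta(k-1)$, by an additive term of order $1/D$; for instance Caro--Wei gives $\frac{\ell-1}{D+1}+\frac{1}{D}$, which is already below $\ell/D$ when $\ell=D+1$, so ``tightened greedy or Caro--Wei'' cannot reach $\ell/D$ on its own. What does work is to combine your observation with Brooks' theorem: for $\Delta\ge 3$ one has $D\ge 3$, $\Flat(H)[T]$ is connected by \Cref{lem:MCS}, and your observation shows it is not $K_{D+1}$; hence $\chi(\Flat(H)[T])\le D$ (the cases where its maximum degree is smaller, including paths, cycles and smaller cliques, are checked directly), which gives an independent set of size at least $\ell/D=\ell/(\Delta(k-1))$ and completes the proof. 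So the approach is the paper's, but as written the final combinatorial step is conjectured rather than proved, and the specific tools you propose for it would fail.
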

\begin{proof}
  By \eqref{eqn:uT-sum},
  \begin{align*}
    \abs{u(T)} & \le \left( \IntLength \right)^{-\abs{E(T)}} \bigintssss_{{M_\delta}^{E(T)}} \prod_{v \in T}\mathbbm 1_{\overline{C_v}}d\mu.
  \end{align*}
  Thus,
  \begin{align*}
    \abs{u(T)}\le \Pr\left[ \bigwedge_{v \in T}\overline{C_v} \right],
  \end{align*}
  where the probability is over the product distribution of $x_e$'s for $e\in E(T)$, and each $x_e$ is uniform over $\left[ 0,\IntLength \right]$.
  Notice that $\overline{C_v}$ means $\sum_{e\sim v} x_e > 1$,
  which is equivalent to $\sum_{e\sim v} y_e < \delta$ where $y_e=\IntLength-x_e$. 
  As $y_e\ge 0$, this implies that $y_e<\delta$ for all $e\sim v$.
  Since $T\cup E(T)$ is connected in $\Inc(H)$, 
  if $\prod_{v \in T}\mathbbm 1_{\overline{C_v}}=1$,
  then for all $e\in E(T)$, $y_e<\delta$.

  Moreover, for a maximal independent set $I_T$ of $T$ (in $\Flat(H)$), we have that $\abs{I_T}\ge \frac{\abs{T}}{\Delta(k-1)}$,
  as the maximum degree of $\Flat(H)$ is $\Delta(k-1)$.
  Note that for $v\in I_T$, the constraints $C_v$ are independent from each other.
  Then we have 
  \begin{align*}
    \Pr\left[ \bigwedge_{v \in T}\overline{C_v} \right] & \le \Pr\left[ \bigwedge_{v \in I_T}\overline{C_v}\wedge\bigwedge_{e\in E(T)\setminus E(I_T)} y_e<\delta \right] \\
    & = \prod_{v \in I_T}\Pr\left[\;\overline{C_v}\; \right]\prod_{e\in E(T)\setminus E(I_T)}\Pr\left[  y_e<\delta \right]\\
    & = \left( \IntLength \right)^{-\abs{E(T)}} \delta^{\abs{E(T)}-\abs{E(I_T)}} \left( \frac{\delta^{\Delta}}{\Delta!} \right)^{\abs{I_T}},
  \end{align*}
  where we use the fact that the volume of a standard simplex of dimension $\Delta$ is $\frac{1}{\Delta!}$.
  (Recall that we can assume that all vertices have degree $\Delta$, as vertices of smaller degrees correspond to a constraint that trivially holds.)
  As $I_T$ is an independent set, $\abs{E(I_T)}=\Delta\abs{I_T}$.
  Together with $\Delta!\ge \left( \frac{\Delta}{e} \right)^{\Delta}$ and $\abs{I_T}\ge \frac{\abs{T}}{\Delta(k-1)}$, we have
  \begin{align*}
    \Pr\left[ \bigwedge_{v \in T}\overline{C_v} \right] \le \left( \Delta \delta \right)^{\abs{E(T)}} \left( \left( \frac{e}{\Delta} \right)^{\Delta} \right)^{\frac{\abs{T}}{\Delta(k-1)}}.
  \end{align*}
  As $T$ is a MCS, by \Cref{lem:MCS},
  $\abs{E(T)}=\abs{\partial_{\Inc(H)}T}\ge \abs{T}= \ell$.
  Then, as $\delta\Delta<1$,
  \begin{align*}
    \abs{u(T)} & \le \left( \Delta \delta \right)^{\ell} \left( \frac{e}{\Delta} \right)^{\frac{\ell}{k-1}}.\qedhere
  \end{align*}
\end{proof}

Given \Cref{lem:MCS-weight}, the verification of the Koteck\'{y}-Preiss criterion \eqref{eqn:KP} is very similar to previous arguments.

\begin{lemma}  \label{lem:KP-hyper-matching}
  For hypergraphs of maximum degree $\Delta\ge 2$ and maximum hyperedge size $k\ge 2$,
  $\delta \le \left( e^{4}\Delta^{\frac{2k-3}{k-1}}(k-1) \right)^{-1}$, and sufficiently small $\rho$,
  the Koteck\'{y}-Preiss criterion \eqref{eqn:KP} holds for MCSes and $u(\cdot)$.
\end{lemma}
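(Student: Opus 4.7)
The plan is to mimic the proof of \Cref{lem:KP-matching} using the flattened graph $\Flat(H)$ as the ambient graph. First I would split the KP sum via the extended $\Flat(H)$-neighbourhood of $T$, namely
\begin{align*}
  \sum_{T'\not\sim T}\abs{u(T')}e^{(1+\rho)\abs{T'}} \le \sum_{v\in N^{+}_{\Flat(H)}(T)} \sum_{T'\ni v}\abs{u(T')}e^{(1+\rho)\abs{T'}}.
\end{align*}
This is justified because two MCSes are incompatible iff they lie within $\Flat(H)$-distance $1$, and every MCS is itself connected in $\Flat(H)$ by \Cref{lem:MCS}. Since $\Flat(H)$ has maximum degree at most $\Delta(k-1)$, we get $\abs{N^{+}_{\Flat(H)}(T)}\le (\Delta(k-1)+1)\abs{T}$.

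Next I would apply \Cref{lem:BCKL} to $\Flat(H)$ to count MCSes: the number of connected subgraphs of $\Flat(H)$ of size $\ell$ through a fixed vertex is at most $(e\Delta(k-1))^{\ell-1}$, which certainly upper bounds the MCS count through that vertex. Together with \Cref{lem:MCS-weight}, the KP sum is at most
\begin{align*}
  (\Delta(k-1)+1)\abs{T}\sum_{\ell\ge 1}(e\Delta(k-1))^{\ell-1}(\Delta\delta)^{\ell}\left(\frac{e}{\Delta}\right)^{\ell/(k-1)}e^{(1+\rho)\ell}.
\end{align*}
Factoring $\tfrac{1}{e\Delta(k-1)}$ out of the sum reduces the remainder to a geometric series $\sum_{\ell\ge 1}r^{\ell}$ whose common ratio is
\begin{align*}
  r \;=\; e^{2+\rho+\frac{1}{k-1}}\,(k-1)\,\delta\,\Delta^{\frac{2k-3}{k-1}}.
\end{align*}
The crucial algebraic identity here is $\Delta^{2}\cdot \Delta^{-1/(k-1)}=\Delta^{(2k-3)/(k-1)}$, which is precisely what matches the assumed bound on $\delta$. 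Plugging in $\delta\le (e^{4}\Delta^{(2k-3)/(k-1)}(k-1))^{-1}$ and using $\tfrac{1}{k-1}\le 1$ for $k\ge 2$, I expect to get $r\le e^{\rho-1}<1$ for small enough $\rho$. The leftover prefactor $\tfrac{\Delta(k-1)+1}{e\Delta(k-1)}\le \tfrac{3}{2e}$ (valid for $\Delta,k\ge 2$, since then $\Delta(k-1)\ge 2$) should then yield a total of at most $\tfrac{3}{2e}\cdot \tfrac{e^{\rho-1}}{1-e^{\rho-1}}\abs{T}<\abs{T}$, exactly as in the end of the proof of \Cref{lem:KP-matching}.

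The main obstacle has already been resolved in \Cref{lem:MCS-weight}: the factor $(e/\Delta)^{\ell/(k-1)}$ saved by the refined polymer model (MCSes together with their broken vertices) is what compensates for the $(k-1)^{\ell}$ blow-up introduced by the larger maximum degree of $\Flat(H)$. Without this refinement, one would only be able to handle $\delta=O((\Delta k)^{-(k-1)})$, as noted earlier in the section. Given \Cref{lem:MCS-weight} and \Cref{lem:BCKL}, what remains is a routine geometric-series estimate, and the only bookkeeping subtlety is to consistently work in $\Flat(H)$ rather than in $H$ when counting neighbours and connected subgraphs.
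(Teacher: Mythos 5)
Your proposal is correct and follows essentially the same route as the paper: use $\Flat(H)$ as the ambient graph, bound the number of incompatible MCSes through a vertex by \Cref{lem:BCKL}, plug in the weight bound from \Cref{lem:MCS-weight}, and sum the resulting geometric series with common ratio $e^{2+\rho+\frac{1}{k-1}}(k-1)\delta\Delta^{\frac{2k-3}{k-1}}\le e^{\rho-1}$. The only cosmetic difference is that you spell out the split over $N^{+}_{\Flat(H)}(T)$ explicitly, which the paper leaves implicit in the $(\Delta(k-1)+1)\abs{T}$ prefactor.
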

\begin{proof}
  By \Cref{lem:MCS}, every MCS is a connected subgraph in $\Flat(H)$.
  Thus, by \Cref{lem:BCKL}, the number of MCSes of size $\ell$ is at most $(e\Delta(k-1))^{\ell}$.
  For $\Delta\ge 2$, $k\ge 2$, and sufficiently small constant $\rho$,
  \begin{align*}
    \sum_{T' \not \sim T}\abs{w(T')}  e^{\abs{T'}+g(T')} 
    & \le (\Delta(k-1)+1)\abs{T} \sum_{\ell\ge 1} (e\Delta(k-1))^{\ell-1} \left( \Delta \delta \right)^{\ell} \left( \frac{e}{\Delta} \right)^{\frac{\ell}{k-1}} e^{(1+\rho)\ell}\\
    & = \frac{\Delta(k-1)+1}{e\Delta(k-1)}\abs{T}\sum_{\ell\ge 1} \left( e^{2+\frac{1}{k-1}+\rho}\delta\Delta^{\frac{2k-3}{k-1}} (k-1) \right)^{\ell} 
    \le \frac{1.5}{e}\abs{T}\sum_{\ell\ge 1} \left( e^{\rho-1} \right)^{\ell} \\
    & \le \left(\frac{1.5}{e}\cdot\frac{e^{\rho-1}}{1-e^{\rho-1}}\right)\abs{T} < \abs{T}.\qedhere
  \end{align*}
\end{proof}

\subsection{Proof of \texorpdfstring{\Cref{thm:FPTAS-hyper-matching}}{Theorem 4}}

With the KP criterion verified, we can prove \Cref{thm:FPTAS-hyper-matching} now.
\begin{proof}[Proof of \Cref{thm:FPTAS-hyper-matching}]
  We apply \Cref{prop:fptas-JKP20} on $\Flat(H)$.
  For a polymer (MCS) $T$,
  let $g(T)=\rho \abs{T}$ for a sufficiently small constant $T$, which then satisfies Item~\eqref{item:JKP-2}.
  Item~\eqref{item:JKP-3} follows from \Cref{lem:KP-hyper-matching}.
  For Item~\eqref{item:JKP-1}, computing $g(T)$ is trivial.
  To determine if $T$ is a polymer is trivial, we just check if $\varphi(T)=T$.
  This takes time linear in $\abs{T}$.
  
  For computing $u(T)$,
  we use the expression \eqref{eqn:uT-sum}.
  The integral in \eqref{eqn:uT-sum} is the volume of a polytope defined by the constraints $\overline{C_v}$ for $v\in T$, $C_v$ for $v\in B(T)$, and $x_e\ge 0$ for $e\in E(T)$.
  To find the defining constraints of this polytope, we need to compute $B(T)$.
  Note that by definition, $v\in B(T)$ only if $T\cup\{v\}$ is connected,
  which implies that $B(T)\subseteq \partial_{\Flat(H)}T$.
  Then we just need to go through every $v\in \partial_{\Flat(H)}T$ and check if $\varphi(T\cup \{v\})=T$.
  As $\abs{\partial_{\Flat(H)}T} \le \Delta (k-1) \abs{T}$,
  this takes at most $O(\abs{T}^2)$ time.

  There are $\abs{T}+\abs{B(T)}+\abs{E(T)}$ constraints and $\abs{E(T)}$ variables.
  Note that $\abs{B(T)}\le\abs{\partial_{\Flat(H)}T}\le \Delta (k-1) \abs{T}$.
  Recall that $\abs{E(T)}\le \Delta \abs{T}$ and by \Cref{lem:MCS}, $\abs{E(T)}\ge\abs{T}$.
  Thus, the number of vertices of this polytope is at most
  \begin{align*}
    \binom{\abs{T}+\abs{B(T)}+\abs{E(T)}}{\abs{E(T)}} & \le \left(\frac{e(\abs{T}+\abs{B(T)}+\abs{E(T)})}{\abs{E(T)}}\right)^{\abs{E(T)}}\\
    & \le (e\Delta k +e)^{\Delta\abs{T}}.
  \end{align*}
  Using the polytope volume algorithm by Lawrence \cite{Law91} or by Barvinok \cite{Bar93}, we can compute $u(T)$ in $O(e^{C\abs{\gamma}})$ time for some constant $C$ that depends on $\Delta$ and $k$. 
  This verifies Item~\eqref{item:JKP-1} of \Cref{prop:fptas-JKP20} and finishes the proof.
\end{proof}

\section{Concluding remarks} \label{sec:discuss}

Our work leaves open the question of whether, under the setting of \Cref{thm:FPTAS-hyper-matching}, an FPTAS exists for $\delta=\frac{C}{\Delta k}$ for some constant $C>0$.
The most straightforward potential approach to this bound is to strengthen the bound in \Cref{lem:MCS-weight} to $\left( C\delta \right)^{\abs{T}}$.
However, this strengthening does not appear to hold, at least for some parameters $k$ and $\Delta$, when $\delta=\frac{C}{\Delta k}$.
To see this, consider $T=\{v_1,\dots,v_k\}$ of size $k$.
Let the hyperedges $e_1=\dots=e_{\Delta-1}=T$.
Moreover, for each $i\in[k]$, introduce $f_i$ such that $v_i\in f_i$ but $v_j\not\in f_i$ for any $j\neq i$. 
Namely, each $f_i$ contains only $v_i$ from $T$ (and possibly other vertices from outside of $T$).
Note that $T$ is an MCS and let $B(T)=\emptyset$.
Then, $\abs{T}=k$ and $\abs{E(T)}=\Delta-1+k$.
We have
\begin{align*}
  \bigintssss_{{M_\delta}^{E(T)}} \prod_{v \in T}\mathbbm 1_{\overline{C_v}} d\mu &= \sum_{i=1}^k\bigintssss_{{M_\delta}^{E(T)}} \mathbbm 1_{\overline{C_{v_i}}}\prod_{j=1}^{k}  \mathbbm 1_{x_{f_j}\ge x_{f_i}} d\mu\\
  & = \sum_{i=1}^k\bigintssss_{{M_\delta}^{E(T)}} \mathbbm 1_{\sum_{t=1}^{\Delta-1}y_{e_t}\le\delta-y_{f_i}}\prod_{j=1}^{k}  \mathbbm 1_{y_{f_j}\le y_{f_i}} d\mu\\
  & = \sum_{i=1}^k\bigintssss_{0}^{\delta} \frac{(\delta-y_{f_i})^{\Delta-1}y_{f_i}^{k-1} }{(\Delta-1)!}d y_{f_i}\\
  & = \frac{k!\delta^{\Delta-1+k}}{(\Delta-1+k)!},
\end{align*}
where we substitute $y_{e}=\IntLength-x_e$ for all $e\in\{e_1,\dots,e_{\Delta-1},f_1,\dots,f_{k}\}$.
Then, 
\begin{align*}
  \abs{u(T)}& = \left( \IntLength \right)^{-(\Delta-1+k)}\frac{k!\delta^{\Delta-1+k}}{(\Delta-1+k)!} \\
  & \ge \left( \IntLength \right)^{-(\Delta-1+k)}\frac{\delta^{\Delta-1+k}}{(\Delta-1+k)^{\Delta-1}}\\
  & = (1+\delta)^{-(\Delta-1+k)} \left( \frac{\Delta \delta}{\Delta-1+k} \right)^{\Delta-1}(\Delta\delta)^{k}.
\end{align*}
Plug in $\delta=\frac{C}{\Delta k}$ and let $k\ge \Delta$ such that $\frac{k}{\log k}\gg\frac{\Delta}{\log \Delta}$ (for example $k=\Delta^2$).
Then,
\begin{align*}
  \frac{\abs{u(T)}}{\delta^k} 
  & \ge C_0 \left( \frac{C}{2k^2} \right)^{\Delta-1}\Delta^{k}\\
  & \ge C_1^k = C_1^{\abs{T}},
\end{align*}
for some constant $C_0>0$ and any constant $C_1>0$ when $\Delta\rightarrow\infty$.
Thus, to achieve an FPTAS for $\delta=\frac{C}{\Delta k}$, some new ideas are required.

\bibliographystyle{alpha}
\bibliography{volume-bibliography.bib}

\end{document}